\newlist{inlistalph}{enumerate*}{1}
\setlist[inlistalph]{label=(\alph*)}
\newcommand*{\qedhere}{\hfill\BlackBox\\[2mm]}
\newcommand*{\claimqedhere}{\hfill (Claim) \BlackBox\\[2mm]}
\newcommand{\AND}{{\mbox{ }\wedge\mbox{ }}}
\newcommand{\NAT}{\mathbb{N}}
\newcommand*{\concat}{^\frown}
\newcommand*{\Txt}{\mathbf{Txt}}
\newcommand*{\G}{\mathbf{G}}
\newcommand*{\It}{\mathbf{It}}
\newcommand*{\Sd}{\mathbf{Sd}}
\newcommand*{\Psd}{\mathbf{Psd}}
\newcommand*{\Td}{\mathbf{Td}}
\newcommand*{\Ex}{\mathbf{Ex}}
\newcommand*{\Bc}{\mathbf{Bc}}
\newcommand*{\Fin}{\mathbf{Fin}}
\newcommand*{\True}{\mathbf{T}}
\newcommand*{\Cind}{\mathbf{CInd}}
\newcommand*{\T}{\mathbf{T}}
\newcommand*{\N}{\mathbb{N}}
\newcommand*{\Pow}{\mathrm{Pow}}
\newcommand*{\La}{\mathcal{L}}
\newcommand*{\Seq}{{\mathbb{S}\mathrm{eq}}}
\newcommand*{\totalCp}{\mathcal{R}}
\newcommand*{\partialCp}{\mathcal{P}}
\newcommand{\range}{\mathrm{range}}
\newcommand{\content}{\mathrm{content}}
\newcommand{\ind}{\mathrm{ind}}
\newcommand{\pad}{\mathrm{pad}}
\newcommand{\ORT}{\textbf{ORT}\xspace}
\newcommand{\settwo}[1]{
  \{ #1 \}
}
\newsavebox{\@brx}
\newcommand{\llangle}[1][]{\savebox{\@brx}{\(\m@th{#1\langle}\)}%
  \mathopen{\copy\@brx\kern-0.5\wd\@brx\usebox{\@brx}}}
\newcommand{\rrangle}[1][]{\savebox{\@brx}{\(\m@th{#1\rangle}\)}%
  \mathclose{\copy\@brx\kern-0.5\wd\@brx\usebox{\@brx}}}
\newcommand{\convs}{\mathclose{\hbox{$\downarrow$}}}
\newcommand{\divs}{\mathclose{\hbox{$\uparrow$}}}
\newcommand*{\falls}{\text{if }}
\newcommand*{\sonst}{\text{otherwise}}
\newcommand*{\sonstfalls}{\text{else, if }}
\newcommand*{\cIf}{\text{if }}
\newcommand*{\otw}{\text{otherwise}}
\newcommand{\abs}[1]{\lvert #1 \rvert}
\newcommand*{\noqed}{\renewcommand{\jmlrQED}{}}
\newcommand{\itemin}[1]{\item[#1\hspace{-0.5cm}] \hspace{0.5cm}}
\newcounter{count}
\newtheorem{claim}[count]{Claim}
\g@addto@macro{\@algocf@init}{\SetKwInput{input}{Input}} 
\g@addto@macro{\@algocf@init}{\SetKwInput{output}{Semantic Output}} 
\g@addto@macro{\@algocf@init}{\SetKwInput{init}{Initialization}} 
\newcommand*{\rec}{_\mathbf{REC}}
\newcommand{\tORT}{Operator Recursion Theorem (\(\ORT\))}
\newcommand*{\CalL}{\mathcal{L}}
\newcommand*{\CalR}{\mathcal{R}}
\newcommand*{\natnum}{\mathbb{N}}
\DeclareDocumentCommand{\set}{m g o}%
{%
    \IfNoValueTF{#3}{\left}{#3}\{#1
            \IfNoValueTF{#2}{}{\ \IfNoValueTF{#3}{\left}{#3}\vert\ \vphantom{#1}#2\IfNoValueTF{#3}{\right.}{}}
                \IfNoValueTF{#3}{\right}{#3}\}%
}
\DeclareDocumentCommand{\abs}{m o}%
{%
    \IfNoValueTF{#2}{\left}{#2}\vert#1
                \IfNoValueTF{#2}{\right}{#2}\vert%
}
\pgfplotsset{compat=1.16}
\newcommand*{\REC}{\textbf{REC}}
\newcommand{\sort}{\operatorname{sort}}
\title[Learning Languages with Decidable Hypotheses]{Learning Languages with Decidable Hypotheses}
\author{
\Name{Julian Berger} \Email{julian.berger@student.hpi.uni-potsdam.de} \\ 
\Name{Maximilian B\"{o}ther} \Email{maximilian.boether@student.hpi.uni-potsdam.de} \\ 
\Name{Vanja Dosko\v{c}} \Email{vanja.doskoc@hpi.de} \\ 
\Name{Jonathan Gadea Harder} \Email{jonathan.gadeaharder@student.hpi.uni-potsdam.de} \\ 
\Name{Nicolas Klodt} \Email{nicolas.klodt@student.hpi.uni-potsdam.de} \\ 
\Name{Timo K\"{o}tzing} \Email{timo.koetzing@hpi.de} \\  
\Name{Winfried L\"{o}tzsch} \Email{winfried.loetzsch@student.hpi.uni-potsdam.de} \\ 
\Name{Jannik Peters} \Email{jannik.peters@student.hpi.uni-potsdam.de} \\ 
\Name{Leon Schiller} \Email{leon.schiller@student.hpi.uni-potsdam.de} \\ 
\Name{Lars Seifert} \Email{lars.seifert@student.hpi.uni-potsdam.de} \\ 
\Name{Armin Wells} \Email{armin.wells@student.hpi.uni-potsdam.de} \\ 
\Name{Simon Wietheger} \Email{simon.wietheger@student.hpi.uni-potsdam.de} \\ 
\addr Hasso Plattner Institute \\ University of Potsdam, Germany}
\begin{document}

\maketitle

\begin{abstract}
	In \emph{language learning in the limit}, the most common type of hypothesis
    is to give an enumerator for a language. This so-called $W$-index allows for
    naming arbitrary computably enumerable languages, with the drawback that
    even the membership problem is undecidable. In this paper we use a different
    system which allows for naming arbitrary decidable languages, namely
    \emph{programs for characteristic functions} (called $C$-indices). These
    indices have the drawback that it is now not decidable whether a given
    hypothesis is even a legal $C$-index.
	
	In this first analysis of learning with $C$-indices, we give a structured
    account of the learning power of various restrictions employing $C$-indices,
    also when compared with $W$-indices. We establish a hierarchy of learning
    power depending on whether $C$-indices are required (a) on all outputs; (b)
    only on outputs relevant for the class to be learned and (c) only in the
    limit as final, correct hypotheses. Furthermore, all these settings are
    weaker than learning with $W$-indices (even when restricted to classes of
    computable languages). We analyze all these questions also in relation to
    the mode of data presentation.
	
	Finally, we also ask about the relation of semantic versus syntactic
    convergence and derive the map of pairwise relations for these two kinds of
    convergence coupled with various forms of data presentation.
\end{abstract}
\begin{keywords}
  language learning in the limit, inductive inference, decidable languages, characteristic index
\end{keywords}

\section{Introduction}

We are interested in the problem of algorithmically learning a description for a
formal language (a computably enumerable subset of the set of natural numbers)
when presented successively all and only the elements of that language; this is
called \emph{inductive inference}, a branch of (algorithmic) learning theory.
For example, a learner $h$ might be presented more and more even numbers. After
each new number, $h$ outputs a description for a language as its conjecture. The
learner $h$ might decide to output a program for the set of all multiples of
$4$, as long as all numbers presented are divisible by~$4$. Later, when $h$ sees
an even number not divisible by $4$, it might change this guess to a program for
the set of all multiples of~$2$.

Many criteria for determining whether a learner $h$ is \emph{successful} on a
language~$L$ have been proposed in the literature. \citet{Gold67}, in his
seminal paper, gave a first, simple learning criterion,
\emph{$\Txt\G\Ex$-learning}\footnote{$\Txt$ stands for learning from a
\emph{text} of positive examples; $\G$ for Gold, indicating full-information
learning; $\Ex$ stands for \emph{explanatory}.}, where a learner is
\emph{successful} if and only if, on every \emph{text} for $L$ (listing of all
and only the elements of $L$) it eventually stops changing its conjectures, and
its final conjecture is a correct description for the input language.

Trivially, each single, describable language $L$ has a suitable constant
function as a $\Txt\G\Ex$-learner (this learner constantly outputs a description
for $L$). Thus, we are interested in analyzing for which \emph{classes of
languages} $\CalL$ is there a \emph{single learner} $h$ learning \emph{each}
member of $\CalL$. This framework is also known as \emph{language learning in
the limit} and has been studied extensively, using a wide range of learning
criteria similar to $\Txt\G\Ex$-learning (see, for example, the textbook
\citet{JORS99}).

In this paper we  put the focus on the possible descriptions for languages. Any
computably enumerable language $L$ has as possible descriptions any program
enumerating all and only the elements of $L$, called a $W$-index (the language
enumerated by program $e$ is denoted by $W_e$). This system has various
drawbacks; most importantly, the function which decides, given $e$ and $x$,
whether $x \in W_e$ is not computable. We propose to use different descriptors
for languages: programs for characteristic functions (where such programs $e$
describe the language $C_e$ which it decides). Of course, only decidable
languages have such a description, but now, given a program $e$ for a
characteristic function, $x \in C_e$ is decidable. Additionally to many
questions that remain undecidable (for example whether $C$-indices are for the
same language or whether a $C$-index is for a finite language), it is not
decidable whether a program $e$ is indeed a program for a characteristic
function. Thich leads to a new set of problems: learners cannot be
(algorithmically) checked whether their outputs are viable (in the sense of
being programs for characteristic functions).

Based on this last observation we study a range of different criteria which
formalize what kind of behavior we expect form our learners. In the most relaxed
setting, learners may output any number (for a program) they want, but in order
to $\Ex$-learn, they need to to converge to a correct $C$-index; we denote this
restriction with $\Ex_C$. Requiring additionally to only use $C$-indices in
order to successfully learn we denote by $\Cind\Ex_C$; requiring $C$-indices on
\emph{all} inputs (not just for successful learning, but also when seeing input
from no target language whatsoever) we denote by $\tau(\Cind)\Ex_C$. In
particular, the last restriction requires the learner to be total; in order to
distinguish whether the loss of learning power is due to the totality
restriction or truly due to the additional requirement of outputting
$C$-indices, we also study $\CalR\Cind\Ex_C$, that is, the requirement
$\Cind\Ex_C$ where additionally the learner is required to be total.

We note that $\tau(\Cind)\Ex_C$ is similar to learning \emph{indexable
families}. Indexable families are classes of languages $\CalL$ such that there
is an enumeration $(L_i)_{i \in \natnum}$ of all and only the elements of
$\CalL$ for which the decision problem ``$x \in L_i$'' is decidable. Already for
such classes of languages we get a rich structure. A survey of previous work in
this area can be found in \citet{LZZ08}. For a learner $h$ learning according to
$\tau(\Cind)\Ex_C$ we have that $L_x = C_{h(x)}$ gives an indexing of a family
of languages, and $h$ learns some subset thereof. We are specifically interested
in the area between this setting and learning with $W$-indices ($\Ex_W$). 

The criteria we analyze naturally interpolate between these two settings. We
show that we have the following hierarchy: $\tau(\Cind)\Ex_C$ allows for
learning strictly fewer classes of languages than $\CalR\Cind\Ex_C$, which allow
for learning the \emph{same} classes as $\Cind\Ex_C$, which again are fewer than
learnable by $\Ex_C$, which in turn renders fewer classes learnable than
$\Ex_W$.

All these results hold for learning with full information. In order to study the
dependence on the mode of information presentation, we also consider
\emph{partially set-driven} learners ($\Psd$,~\citet{BlumBlum75,SchRicht84}),
which only get the set of data presented so far and the iteration number as
input; \emph{set-driven} learners ($\Sd$,~\citet{WC80}), which get only the set
of data presented so far; \emph{iterative} learners
($\It$,~\citet{Wiehagen76,Fulk85}), which only get the new datum and its current
hypothesis and, finally, \emph{transductive} learners
($\Td$,~\citet{CCJS07,Kotzing09}), which only get the current data. Note that
transductive learners are mostly of interest as a proper restriction to all
other modes of information presentation.

We show that full-information learners can be turned into partially set-driven
learners without loss of learning power. Furthermore, iterative learning is
strictly less powerful than set-driven learning, in all settings. Altogether we
analyze 25 different criteria and show how each pair relates. All these results
are summarized in Figure~\ref{Map:DifferentSettings} as one big map stating all
pairwise relations of the learning criteria mentioned, giving 300 pairwise
relations in one diagram, proven with 13 theorems in
Section~\ref{Sec:CindLearning}. Note that the results comparing learning
criteria with $W$-indices were previously known, and some proofs could be
extended to also cover learning with $C$-indices.

\begin{figure}[h]
\begin{center}
    \begin{adjustbox}{width=0.85\textwidth}
        \includegraphics{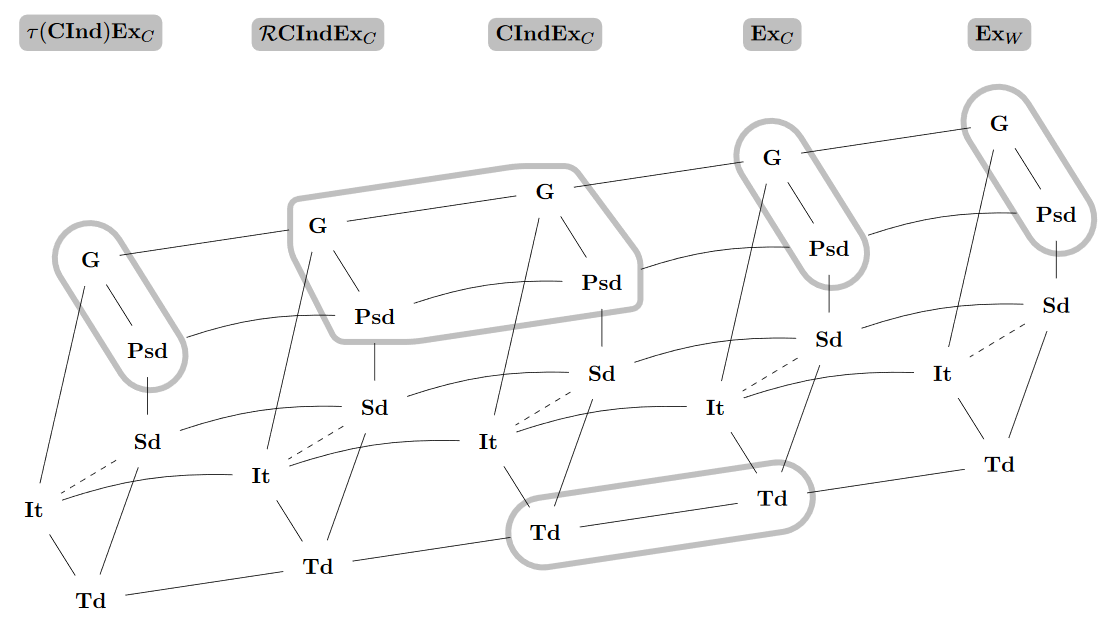}
    \end{adjustbox}
\end{center}
    \caption{Relation of various requirements when to output characteristic
        indices paired with various memory restrictions. We omit mentioning $\Txt$
        to favour readability. Black solid lines imply trivial inclusions
        (bottom-to-top, left-to-right). Dashed lines imply non-trivial inclusions
        (bottom-to-top, left-to-right). Furthermore, greyly edged areas illustrate a
        collapse of the enclosed learning criteria and there are no further collapses.}
    \label{Map:DifferentSettings}
\end{figure}

We derive a similar map considering a possible relaxation on $\Ex_C$-learning:
while $\Ex_C$ requires syntactic convergence to one single correct $C$-index, we
consider \emph{behaviorally correct} learning, $\Bc_C$ for short, where the
learner only has to semantically converge to correct $C$-indices (but may use
infinitely many different such indices). We again consider the different modes
of data presentation and determine all pairwise relations in
Figure~\ref{Map:Convergence}.

\begin{figure}[h]
\begin{center}
    \begin{adjustbox}{width=0.42\textwidth}
        \includegraphics{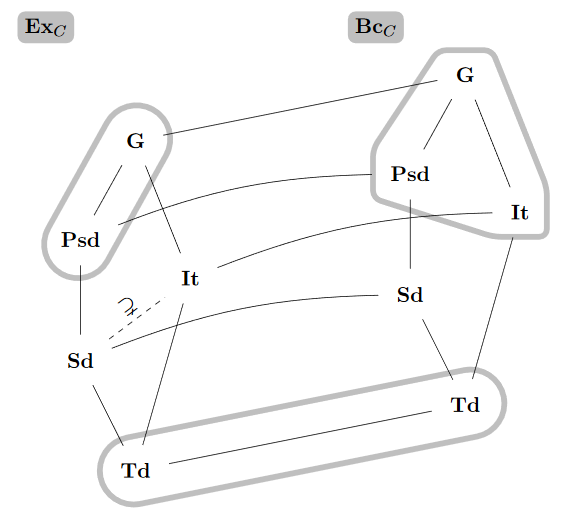}
    \end{adjustbox}
\end{center}
    \caption{Relation of learning criteria under various memory restrictions. On
        the left-hand side we require syntactic convergence to characteristic
        indices and semantic on the right-hand side. We omit mentioning $\Txt$ in favour
        of readability. Black solid lines imply trivial inclusions (bottom-to-top,
        left-to-right). The dashed line between $\Sd$ and $\It$ indicates that
        $[\Txt\It\Ex_C]\rec \subsetneq [\Txt\Sd\Ex_C]\rec$. Furthermore, greyly edged
        areas illustrate a collapse of the enclosed learning criteria and there are no
        further collapses.}
    \label{Map:Convergence}
\end{figure}

Before getting to our results in detail, we continue with some (mathematical)
preliminaries in Section~\ref{sec:preliminaries}.

\section{Preliminaries}%
\label{sec:preliminaries}
In this section we discuss the used notation. Unintroduced notation follows the
textbook of~\citet{Rogers87}. For learning criteria we follow the system
introduced by \citet{Kotzing09}. 

\subsection{Mathematical Notations and Learning Criteria}%
\label{sub:mathematical_notatation_and_language_learning}

With \(\N\) we denote the set of all natural numbers, namely
\(\set{0,1,2,\ldots} \). We denote the subset and proper subset relation between
two sets with \(\subseteq\) and \(\subsetneq\), respectively. With
\(\subseteq_\Fin\) we denote the finite subset relation. We use \(\emptyset\)
and \(\varepsilon\) to denote the empty set and empty sequence, respectively.
For any set \(A\), the set of all subsets of \(A\) is denoted by \(\Pow(A)\).
The set of all computable functions is denoted by \(\partialCp\), the subset of
all total computable functions by \(\totalCp\). If a function \(f\) is (not)
defined on some argument \(x\in\N\), we say that \(f\) converges (diverges) on
\(x\), denoting this fact with \(f(x)\convs\) (\(f(x)\divs\)). We fix an
effective numbering \(\set{{\varphi_{e}}}_{e\in\N} \) of \(\partialCp\). For any
\(e\in\N\), we let \(W_e\) denote the domain of \(\varphi_{e}\) and call \(e\) a
\(W\)-index of \(W_e\). This set we call the \emph{$e$-th computably enumerable
set}. We call \(e\in\N\) a \emph{\(C\)-index} (\emph{characteristic index}) if
and only if \(\varphi_{e}\) is a total function such that for all \(x\in\N\) we
have \(\varphi_{e}(x)\in \set{0,1} \). Furthermore, we let
\(C_e=\set{x\in\N}{\varphi_{e}(x)=1 }\). For a computably enumerable set $L$, if
some \(e\in\N\) is a \(C\)-Index with \(C_e=L\), we write
\(\varphi_{e}=\chi_{L}\). Note that, if a set has a \(C\)-index, it is
\emph{recursive}.  The set of all recursive sets is denoted by \(\REC\). For a
finite set \(D\subseteq\N\), we let \(\ind(D)\) be a \(C\)-index for \(D\). Note
that \(\ind\in\totalCp\). Furthermore, we fix a Blum complexity measure \(\Phi\)
associated with \(\varphi\), that is, for all \(e,x\in\N\), \(\Phi_e(x)\) is the
number of steps the function $\varphi_e$ takes on input \(x\) to converge,
see~\citep{Blum67}. The padding function \(\pad\in\totalCp\) is an injective
function such that, for all \(e,n\in\N\), we have \(
\varphi_{e}=\varphi_{_{\pad(e,n)}}\). We use \(\langle \cdot,\cdot \rangle \) as
a computable, bijective function that codes a pair of natural numbers into a
single one. We use \(\pi_1\) and \(\pi_2\) as computable decoding functions for
the first and section component, i.e., for all \(x,y\in\N\) we have \(
\pi_1(\langle x,y \rangle )=x\) and \(\pi_2(\langle x,y\rangle )=y\). 

We learn computably enumerable sets $L$, called \emph{languages}. We fix a
\emph{pause} symbol \(\#\), and let, for any set \(S\), \(S_\# \coloneqq S\cup
\set{\#} \). Information about languages is given from \emph{text}, that is,
total functions \(T\colon \N\to\N\cup \set{\#} \). A text $T$ is of a certain
language $L$ if its \emph{content} is exactly $L$, that is, $\content(T)
\coloneqq \range(T) \setminus \{ \# \}$ is exactly $L$. We denote the set of all
texts as $\Txt$ and the set of all texts of a language $L$ as $\Txt(L)$. For any
$n \in \N$, we denote with $T[n]$ the initial sequence of the text $T$ of length
$n$, that is, \(T[0] \coloneqq \varepsilon\) and $T[n] \coloneqq (T(0), \dots,
T(n-1))$. Given a language \(L\) and  \(t\in\N\), the set of sequences
consisting of elements of \(L \cup \{ \# \}\) that are at most \(t\) long, is
denoted by \(L^{\leq t}_\#\). Furthermore, we denote with \(\Seq\) all finite
sequences over \(\N_\# \) and define the \emph{content} of such sequences
analogous to the content of texts. The concatenation of two sequences \(\sigma,
\tau\in\Seq\) is denoted by  \(\sigma\concat\tau\). Furthermore, we write
$\subseteq$ for the \emph{extension relation} on sequences and fix a order
$\leq$ on \(\Seq\) interpreted as natural numbers.

Now, we formalize learning criteria following the system introduced by
\citet{Kotzing09}. A \emph{learner} is a partial function \(h\in\partialCp\). An
\emph{interaction operator} \(\beta\) is an operator that takes a learner
\(h\in\partialCp\) and a text \(T\in\Txt\) as input and outputs a (possibly
partial) function \(p\). Intuitively, \(\beta\) defines which information is
available to the learner for making its hypothesis. We consider
\emph{Gold-style} or \emph{full-information} learning (\citet{Gold67}), denoted
by \(\G\), \textit{partially set-driven} learning
(\(\Psd\),~\citet{BlumBlum75,SchRicht84}), \textit{set-driven} learning
(\(\Sd\),~\citet{WC80}), \textit{iterative} learning (\(\It\),~\citet{
Wiehagen76,Fulk85}) and \textit{transductive} learning
(\(\Td\),~\citet{CCJS07,Kotzing09}). To define the latter formally, we introduce
a symbol ``?'' for the learner to signalize, that the information given is
insufficient. Formally, for all learners \(h\in\partialCp\), texts \(T\in\Txt\)
and all \(i\in\N\), define
\begin{align*}
    \G(h,T)(i) &= h(T[i]);\\
    \Psd(h,T)(i) &= h(\content(T[i]),i);\\
    \Sd(h,T)(i) &= h(\content(T[i]));\\
    \It(h,T)(i) &= 
        \begin{cases}
            h(\varepsilon), &\cIf i=0;\\
            h(\It(h,T)(i-1),T(i-1)), &\otw;
        \end{cases}\\
    \Td(h,T)(i) &=
        \begin{cases}
            \mbox{?}, &\cIf i=0;\\
            \Td(h,T)(i-1), &\sonstfalls h(T(i-1))=\mbox{?};\\
            h(T(i-1)), &\otw.
        \end{cases}
\end{align*}
For any of the named interaction operators \(\beta\), given a \(\beta\)-learner
\(h\), we let \(h^*\) (the \textit{starred} learner) denote a \(\G\)-learner
simulating \(h\), i.e., for all \(T\in\Txt\), we have \(\beta(h,T)=\G(h^*,T)\).
For example, let \(h\) be a \(\Sd\)-learner. Then, intuitively, \(h^*\) ignores
all information but the content of the input, simulating \(h\) with this
information, i.e., for all finite sequences \(\sigma\), we have
\(h^*(\sigma)=h(\content(\sigma))\).

For a learner to successfully identify a language, we may oppose constraints on
the hypotheses the learner makes.  These are called \emph{learning
restrictions}. A famous example was given by~\citet{Gold67}. He required the
learner to be \emph{explanatory}, i.e., the learner must converge to a
\emph{single}, correct hypothesis  for the target language. We hereby
distinguish whether the final hypothesis is interpreted as a \(C\)-index or as a
\(W\)-index, denoting this by \(\Ex_C\) and \(\Ex_W\), respectively. Formally,
for any sequence of hypotheses \(p\) and text \(T\in\Txt\), we have
\begin{align*}
     \Ex_C(h,T)&\Leftrightarrow\exists n_0\colon\forall n\geq n_0\colon p(n)=p(n_0)\land
        \varphi_{p(n_0)}=\chi_{\content(T)};\\
     \Ex_W(h,T)&\Leftrightarrow\exists n_0\colon\forall n\geq n_0\colon p(n)=p(n_0)\land
        W_{p(n_0)}=\content(T).
\end{align*}
We say that explanatory learning requires \emph{syntactic} convergence. If there
exists a \(C\)-index (or \(W\)-index) for a language, then there exist
infinitely many. This motivates to not require syntactic but only
\emph{semantic} convergence, i.e., the learner may make mind changes, but it
has to, eventually, only output correct hypotheses. This is called
\emph{behaviorally correct} learning (\(\Bc_C\) or \(\Bc_W\),~\citet{CL82,OW82}).
Formally, let \(p\) be a sequence of hypotheses and let \(T\in\Txt\), then
\begin{align*}
     \Bc_C(p, T)&\Leftrightarrow\exists n_0\colon\forall n\geq
        n_0\colon\varphi_{p(n)}=\chi_{\content(T)};\\
     \Bc_W(p, T)&\Leftrightarrow\exists n_0\colon\forall n\geq
        n_0\colon W_{p(n)}=\content(T).
\end{align*}
In this paper, we consider learning with \(C\)-indices. It is, thus, natural to
require the hypotheses to consist solely of \(C\)-indices, called
\emph{\(C\)-index} learning, and denoted by \(\Cind\). Formally, for a sequence
of hypotheses \(p\) and a text \(T\), we have
\begin{align*}
     \Cind(p,T)&\Leftrightarrow\forall i,x\colon
        \varphi_{p(i)}(x)\in\set{0,1}.
\end{align*}
For two learning restrictions \(\delta\) and \(\delta'\), their combination is
their intersection, denoted by their juxtaposition \(\delta\delta'\). We let
\(\True\) denote the learning restriction that is always true, which is
interpreted as the absence of a learning restriction.

A \textit{learning criterion} is a tuple \((\alpha,\mathcal{C},\beta,\delta)\),
where \(\mathcal{C}\) is the set of admissible learners, usually \(\partialCp\)
or \(\totalCp\), \(\beta\) is an interaction operator and \(\alpha\) and
\(\delta\) are learning restrictions. We denote this criterion with
\(\tau(\alpha)\mathcal{C}\Txt\beta\delta\), omitting \(\mathcal{C}\) if
\(\mathcal{C}=\partialCp\), and a learning restriction if it equals \(\True\).
We say that an admissible learner \(h\in\mathcal{C}\)
\(\tau(\alpha)\mathcal{C}\Txt\beta\delta\)-learns a language \(L\) if and only
if, for arbitrary texts \(T\in\Txt\), we have \(\alpha(\beta(h,T),T)\) and for
all texts \(T\in\Txt(L)\) we have \(\delta(\beta(h,T),T)\). The set of languages
\(\tau(\alpha)\mathcal{C}\Txt\beta\delta\)-learned by  \(h\in\mathcal{C}\) is
denoted by \(\tau(\alpha)\mathcal{C}\Txt\beta\delta(h)\). With
\([\tau(\alpha)\mathcal{C}\Txt\beta\delta]\) we denote the set of all classes
\(\tau(\alpha)\mathcal{C}\Txt\beta\delta\)-learnable by some learner in
\(\mathcal{C}\). Moreover, to compare learning with $W$- and $C$-indices, these
classes may only contain recursive languages, which we denote as
\({[\tau(\alpha)\mathcal{C}\Txt\beta\delta]}\rec\). 

\subsection{Normal Forms}%
\label{sub:normal_forms}
When studying language learning in the limit, there are certain properties of
learner that are useful, e.g., if we can assume a learner to be total.
\citet{KP16} and \citet{KSS17} study under which circumstances learners may be
assumed to be total. Importantly, this is the case for explanatory Gold-style
learners obeying delayable learning restrictions and for behaviorally correct
learners obeying delayable restrictions. Intuitively, a learning restriction is
\emph{delayable} if it allows hypotheses to be arbitrarily, but not indefinitely
postponed without violating the restriction.  Formally, a learning restriction
\(\delta\) is delayable, if and only if for all non-decreasing, unbounded
functions \(r\colon\N\to\N\), texts \(T, T'\in\Txt\) and learning sequences
\(p\) such that for all \(n\in\N\), \(\content(T[r(n)])\subseteq
\content(T'[n])\) and \(\content(T)=\content(T')\), we have, if \(\delta(p,T)\),
then also  \(\delta(p\circ r, T')\). Note that $\Ex_W$, $\Ex_C$, $\Bc_W$,
$\Bc_C$ and $\Cind$ are delayable restrictions.

Another useful notion are \emph{locking sequences}. Intuitively, these contain
enough information such that a learner, after seeing this information, converges
correctly and does not change its mind anymore whatever additional information
from the target language it is given. Formally, let \(L\) be a language and let
\(\sigma\in L_\#^*\). Given a \(\G\)-learner \(h\in\partialCp\), \(\sigma\) is a
\emph{locking sequence} for \(h\) on \(L\) if and only if for all sequences
\(\tau\in L_\#^*\) we have \(h(\sigma)=h(\sigma\tau)\) and \(h(\sigma)\) is a
correct hypothesis for \(L\), see~\citet{BlumBlum75}.  This concept can
immediately be transferred to other interaction operators. Exemplary, given a
$\Sd$-learner $h$ and a locking sequence $\sigma$ of the starred learner $h^*$,
we call the set $\content(\sigma)$ a \emph{locking set}. Analogously, one
transfers this definition to the other interaction operators. It shall not
remain unmentioned that, when considering $\Psd$-learners, we speak of
\emph{locking information}.  In the case of \(\Bc_W\)-learning we do not require
the learner to syntactically converge. Therefore, we call a sequence \(\sigma\in
L_\#^*\) a \emph{\(\Bc_W\)-locking sequence} for a  \(\G\)-learner \(h\) on
\(L\) if, for all sequences \(\tau \in L_\#^*\), \(h(\sigma\tau)\) is a correct
hypothesis for  \(L\), see~\citet{JORS99}. We omit the transfer to other
interaction operators as it is immediate. It is an important observation by
\citet{BlumBlum75}, that for any learner $h$ and any language $L$ it learns,
there exists a ($\Bc_W$-) locking sequence. These notions and results directly
transfer to $\Ex_C$- and $\Bc_C$-learning. When it is clear from the context, we
omit the index.

\section{Requiring $C$-Indices as Output}\label{Sec:CindLearning}

This section is dedicated to proving Figure~\ref{Map:DifferentSettings}, giving
all pairwise relations for the different settings of requiring $C$-indices for
output in the various mentioned modes of data presentation. In general, we
observe that the later we require $C$-indices, the more learning power the
learner has. This holds except for transductive learners which converge to
$C$-indices. We show that they are as powerful as $\Cind$-transductive learners. 

Although we learn classes of recursive languages, the requirement to converge to
characteristic indices does heavily limit a learners capabilities. In the next
theorem we show that even transductive learners which converge to $W$-indices
can learn classes of languages which no Gold-style $\Ex_C$-learner can learn. We
exploit the fact that $C$-indices, even if only conjectured eventually, must
contain both positive and negative information about the guess. 

\begin{theorem}
  We have that \({[\Txt\Td\Ex_W]}\rec \setminus {[\Txt\G\Ex_C]}\rec \neq \emptyset\).\label{ap_th:c-w}
\end{theorem}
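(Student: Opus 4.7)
The plan is to exhibit a class $\mathcal{L} = \{L_k : k \in \N\} \subseteq \REC$ in which each language carries a distinct \emph{tag} encoding its index~$k$, while the remaining elements are arranged via the Operator Recursion Theorem (\ORT) to defeat $\varphi_k$ viewed as a potential $\Txt\G\Ex_C$-learner. Concretely, I would take $L_k = \{\langle 0, k\rangle\} \cup D_k$ with $D_k \subseteq \{\langle 1, x\rangle : x \in \N\}$ uniformly recursive (guaranteed by the \ORT construction). The transductive learner $h$ defined by $h(\langle 0, k\rangle) = w(k)$, where $w(k)$ is a uniformly computable $W$-index for $L_k$, and $h(y) = \mbox{?}$ for all other $y$, then witnesses $\mathcal{L} \in [\Txt\Td\Ex_W]\rec$: on every text for any $L_k$, the tag $\langle 0, k\rangle$ is eventually presented, and the $\Td$-operator locks onto $w(k)$ from that point on.

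For the harder direction, suppose towards a contradiction that some $h \in \partialCp$ is a $\Txt\G\Ex_C$-learner for $\mathcal{L}$; since $\Ex_C$ is delayable, we may assume $h$ is total by the normal-form results cited in the preliminaries. Fix $k$ with $\varphi_k = h$. The construction of $L_k$ is now self-referentially wired via \ORT to defeat $\varphi_k$ on a canonical text $T_k$ for $L_k$. Specifically, I would simulate $h$ on longer and longer initial segments of $T_k$; at each stage~$n$, if $h(T_k[n]) = e_n$ and $\varphi_{e_n}(\langle 1, x_n\rangle)$ becomes defined on a fresh witness $\langle 1, x_n\rangle$ within a growing time budget, I would set the membership of $\langle 1, x_n\rangle$ in $L_k$ to the opposite of $\varphi_{e_n}(\langle 1, x_n\rangle)$. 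The leverage is that any valid $C$-index must correspond to a \emph{total} $\{0,1\}$-valued function, so if $h$ converges on $T_k$ to some $e^*$ with $\varphi_{e^*}$ a $C$-index, then $\varphi_{e^*}$ will eventually disagree with $\chi_{L_k}$ at some coordinate committed during the construction, contradicting $\Ex_C$-convergence.

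The hard part will be coordinating the \ORT construction so that $L_k$ is simultaneously decidable, equipped with the tag $\langle 0, k\rangle$ so as to lie in $\mathcal{L}$, and guaranteed to defeat $h$ at the (a priori unknown) stage where $h$'s mind changes cease. If $h$ never stabilizes on $T_k$, non-convergence already falsifies $\Ex_C$; if $h$ does stabilize, the stage-based construction must already have committed a disagreement between $\chi_{L_k}$ and $\varphi_{e^*}$ on some fresh $\langle 1, x\rangle$. Making both cases cohere in a single effective procedure, while keeping the uniform recursiveness of $L_k$, the tag extraction, and the decoding map $w$ all fitting inside a single \ORT fixed point, is the delicate technical step, though structurally it is a standard Gold-style diagonalization enriched by the $C$-index totality constraint.
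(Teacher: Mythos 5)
Your proposal is correct in outline and follows essentially the same strategy as the paper's proof: a transductive learner that reads a $W$-index tag off each datum, combined with an \ORT diagonalization that exploits the totality of $C$-indices to force the assumed $\Txt\G\Ex_C$-learner's (eventually final) hypothesis to answer a membership query on a fresh witness, whose membership in the target language is then fixed to the opposite of that answer. The only substantive difference is packaging: the paper works with the self-learning class $\Txt\Td\Ex_W(h)\cap\REC$ and builds a \emph{single} diagonal language against the one assumed learner (so only that one language's recursiveness and tagging need checking), whereas your explicit family $\set{L_k}{k\in\N}$ obliges you to verify that \emph{every} $L_k$ is recursive and correctly tagged even when $\varphi_k$ is partial or not a learner at all --- a manageable but genuine extra burden (precisely the ``delicate technical step'' you flag) that the self-learning-class formulation sidesteps.
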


\begin{proof}
    We show this by using the Operator Recursion Theorem (\(\ORT\)) to provide a
    separating class of languages.
    To this end, let $h$ be
    the \(\Td\)-learner with \(h(\#)=\mbox{?}\) and, for all \(x,y\in \N\), let
    \(h(\langle x,y \rangle)= x\).
    Let \(\mathcal{L} = \Txt\Td\Ex_W(h)\cap \textbf{REC}\).
    Assume \(\mathcal{L}\) can be learned by a \(\Txt\G\Ex_C\)-learner \(h'\).
    By \citet{KP16}, we can assume \(h'\in\totalCp\). Then, by
    \ORT{} there exist indices \(e,p,q\in \N\)
    such that
    \begin{align*}
        L \coloneqq W_e &= \range(\varphi_p);\\
        \forall x\colon \tilde{T}(x) \coloneqq \varphi_p(x) &= \langle e, \varphi_q(\tilde{T}[x]) \rangle;\\
        \varphi_q(\varepsilon) &= 0;\\
        \forall \sigma\neq\varepsilon\colon \bar{\sigma} &= \min \{ \sigma' \subseteq \sigma \mid \varphi_q(\sigma') = \varphi_q(\sigma) \} ;\\
        \forall \sigma\neq\varepsilon\colon \varphi_q(\sigma) &= 
            \begin{cases}
                \varphi_q(\bar{\sigma}), &\text{if } \forall \sigma', \bar{\sigma} \subseteq \sigma' \subseteq \sigma \colon \Phi_{h'(\sigma')}(\langle
                    e,\varphi_q(\bar{\sigma})+1\rangle) > |\sigma|;\\
                \varphi_q(\bar{\sigma})+1, &\text{else, for min.\ } \sigma' \text{ contradicting the previous case, if } \\ 
                &\phantom{\sonstfalls.} \varphi_{h'(\sigma')}(\langle e,\varphi_q(\bar{\sigma})+1\rangle) = 0;\\
                \varphi_q(\bar{\sigma})+2, &\text{otherwise.}
            \end{cases}
    \end{align*}
    Here, \(\Phi\) is a Blum complexity
    measure, see~\citet{Blum67}. Intuitively, to define the next $\varphi_p(x)$, we add
    the same element to \(\content(\tilde{T})\) until
    we know whether \(\langle e, \tilde{T}[x]+1 \rangle \in
    C_{h'(\bar{\sigma})}\) holds
    or not. Then, we add the element contradicting this outcome.

    We first show that \(L\in \La\) and afterwards that \(L\) cannot be learned
    by \(h'\).
    To show the former, note that either \(L\) is finite or \(\tilde{T}\) is a non-decreasing
    unbounded computable enumeration of \(L\). Therefore, we have \(L\in
    \REC\). We now prove that $h$
    learns $L$. Let $T \in \Txt(L)$. For all $n \in \N$ where $T(n)$ is not the
    pause symbol, we have $h(T(n)) = e$. With \(n_0\in \N\) being minimal such that
    \(T(n_0)\neq\#\), we get for all \(n\geq n_0\) that
    \(\Td(h,T)(n)=e\). As \(e\) is a correct hypothesis, \(h\) learns
    \(L\) from \(T\) and thus we have that \(L\in\Txt\Td\Ex_W(h)\). Altogether, we
    get that \(L\in \La\).

    By assumption, \(h'\) learns \(L\) from the text \(\tilde{T} \in \Txt(L)\).
    Therefore, there exists \(n_0\in\N\) such that, for all \(n\geq
    n_0\), 
    \[
      h'(\tilde{T}[n]) = h'(\tilde{T}[n_0]) \text{ and } \chi_{L}=
      \varphi_{h'(\tilde{T}[n])},
    \]
    that is, \(h'(\tilde{T}[n])\) is a \(C\)-index for \(L\). Now, as $h'$
    outputs $C$-indices when converging, there are \(t, t' \geq n_0\) such that
    \[
      \Phi_{h'(\tilde{T}[t'])}(\langle e,\varphi_q(\tilde{T}[n_0])+1\rangle) \leq t.
    \]
    Let $t'_0$ and $t_0$ be the first such found. We show that
    $h'(\tilde{T}[t'_0])$ is no correct hypothesis of $L$ by distinguishing the
    following cases.
    \begin{enumerate}
      \itemin{1. Case:} $\varphi_{h'(\tilde{T}[t'_0])}(\langle
          e,\varphi_q(\tilde{T}[n_0])+1\rangle) = 0$. By definition of
          $\varphi_q$ and by minimality of $t_0'$, we have that $\langle
          e,\varphi_q(\tilde{T}[n_0])+1\rangle \in L$, however, the hypothesis
          of $h'(\tilde{T}[t'_0])$ says differently, a contradiction.
      \itemin{2. Case:} $\varphi_{h'(\tilde{T}[t'_0])}(\langle
          e,\varphi_q(\tilde{T}[n_0])+1\rangle) = 1$. By definition of $\varphi_q$ and
          by minimality of $t_0'$, we have that $\langle
          e,\varphi_q(\tilde{T}[n_0])+2\rangle \in L$, but $\langle
          e,\varphi_q(\tilde{T}[n_0])+2\rangle \notin L$. However, the hypothesis of
          $h'(\tilde{T}[t'_0])$ conjectures the latter to be in $L$, a contradiction.
        \qedhere 
    \end{enumerate} \noqed
\end{proof}

Furthermore, known equalities from learning $W$-indices directly apply in the
studied setting as well. These include the following.

\begin{theorem}[\citet{KS95}, \citet{Fulk90}]
    We have that \label{Thm:FulkKS}
    \[{[\Txt\It\Ex_W]}\rec
    \subseteq {[\Txt\Sd\Ex_W]}\rec \text{ and } {[\Txt\Psd\Ex_W]}\rec = {[\Txt\G\Ex_W]}\rec. \]
\end{theorem}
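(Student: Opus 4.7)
I would treat the two assertions separately, since they are classical results proved by different techniques.

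For $[\Txt\It\Ex_W]\rec \subseteq [\Txt\Sd\Ex_W]\rec$, my plan is to adapt the Kinber--Stephan argument. Starting from an iterative learner $h$ that learns a class $\mathcal{L} \subseteq \REC$, I would construct a set-driven learner $h'$ that, on input a finite set $D$, searches for a sequence $\sigma$ with $\content(\sigma) = D$ that behaves like an approximate locking sequence: concretely, one for which $h^*(\sigma \concat \tau) = h^*(\sigma)$ for every $\tau$ with $\content(\tau) \subseteq D$ and $|\tau|$ below some bound growing with $|D|$. The learner outputs $h^*(\sigma)$ for the first such $\sigma$ in some canonical enumeration, and a default hypothesis otherwise. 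Correctness on $L \in \mathcal{L}$ follows from the locking-sequence lemma: there is a locking sequence $\sigma_0$ for $h^*$ on $L$ with $h^*(\sigma_0)$ a correct $W$-index, and once $D = \content(T[n])$ contains $\content(\sigma_0)$ the search finds an appropriate $\sigma$. Stability in the limit is forced by fixing a canonical search order so that, once $D$ stabilizes along any text of $L$, so does $h'(D)$.

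For the equality $[\Txt\Psd\Ex_W]\rec = [\Txt\G\Ex_W]\rec$, the inclusion $\subseteq$ is trivial: any $\Psd$-learner $h$ is simulated by the $\G$-learner $\sigma \mapsto h(\content(\sigma), |\sigma|)$. For the nontrivial direction I would follow Fulk's construction. Given a $\G$-learner $h$ learning $\mathcal{L}$, one may assume $h$ is total since $\Ex_W$ is delayable (via \citet{KP16}). Define the $\Psd$-learner $h'(D, n)$ to enumerate all $\sigma$ with $|\sigma| \leq n$ and $\content(\sigma) \subseteq D$, and to search for one satisfying $h(\sigma \concat \tau) = h(\sigma)$ for every $\tau$ with $\content(\tau) \subseteq D$ and $|\sigma \concat \tau| \leq n$; output $h(\sigma)$ for the least such $\sigma$ (using the order $\leq$ fixed on $\Seq$), and a default otherwise. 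For $L \in \mathcal{L}$ and $T \in \Txt(L)$, $h$ converges on $T$ to some correct $e$, so a locking sequence $\sigma_0$ for $h$ on $L$ exists with $h(\sigma_0) = e$; for $D \supseteq \content(\sigma_0)$ and $n \geq |\sigma_0|$ sufficiently large the search either returns $\sigma_0$ or an earlier qualifying sequence, and in either case the hypothesis is correct, hence $h'$ $\Psd$-identifies $L$.

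The main obstacle in the first inclusion is that an iterative learner's output genuinely depends on the order of presentation, while the set-driven simulator sees only the content; the art lies in picking a search that, uniformly in $D$, locates an order inducing the correct converged hypothesis and whose choice stabilizes as $D$ grows along a text. In the second, the delicate point is tuning the approximate-locking test so that different $(D,n)$ pairs along a text eventually all select the \emph{same} index; using a fixed enumeration order and the minimality of $\sigma$ takes care of this.
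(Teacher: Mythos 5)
The paper cites this theorem to \citet{KS95} and \citet{Fulk90} without reproving it, but it does prove the $\Ex_C$-generalizations of both halves (Theorems~\ref{ap_th:g-psd} and~\ref{ap_thm:it_subset_sd}), so those are the right benchmarks. Your treatment of the second half is essentially Fulk's argument and matches the paper's: totalize $h$ via \citet{KP16}, search over $\sigma \in D^{\leq t}$ for sequences passing an approximate locking test, and output $h$ on the least one. One slip: your claim that the search may return ``an earlier qualifying sequence, and in either case the hypothesis is correct'' is not right --- a $\sigma$ below $\sigma_0$ that passes the bounded test need not be a locking sequence and $h(\sigma)$ need not be correct, and even semantically correct but syntactically distinct outputs would break $\Ex$-convergence. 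The repair, carried out in the paper, is to let $\sigma_0$ be the \emph{minimal} locking sequence and choose $n_0$ so large that every smaller $\sigma'$ is disqualified by some $\tau \in (\content(\sigma_0))^{\leq n_0}_\#$; then the minimum of the candidate set eventually equals $\sigma_0$ exactly. You gesture at this with ``minimality of $\sigma$'', so I count this half as correct in outline.

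The first half has a genuine gap. Your set-driven construction never uses that $h$ is iterative: you search over all orderings $\sigma$ of $D$ for an approximate locking sequence and output $h^*(\sigma)$ for the first one found in canonical order. If that argument worked, it would show $[\Txt\G\Ex_W]\rec \subseteq [\Txt\Sd\Ex_W]\rec$, which is false (set-driven learning is strictly weaker than Gold-style learning; compare Theorem~\ref{thm:sd-weakness}). The failure point is convergence on infinite languages: there $D = \content(T[n])$ never stabilizes, your candidate sets $\set{\sigma}{\content(\sigma) = D}$ are pairwise disjoint as $D$ grows, the approximate test becomes strictly harder with larger $D$ (so a previously selected $\sigma$ has no canonical successor), and distinct locking sequences may carry syntactically distinct correct indices, so the hypotheses need not converge to a single index. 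The Kinber--Stephan argument, as in the paper's Theorem~\ref{ap_thm:it_subset_sd}, instead feeds $h$ the single canonical ascending text: it outputs $h^*(\sort_\#(D))$ when $h^*(\sort_\#(D)) = h^*(\sort_\#(D) \concat \#)$ and a default index for $D$ otherwise. Iterativity is the essential ingredient: once $h$ has converged on $\sort_\#(L)$ after a prefix with content $D_0$, each further element of $L$ is absorbed at the converged state, so $h^*(\sort_\#(D')) $ equals the \emph{same} index $h^*(\sort_\#(D_0))$ for every $D'$ with $D_0 \subseteq D' \subseteq L$. That single-index stability is precisely what your order-search cannot provide.
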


The remaining separations we will show in a more general way, see
Theorems~\ref{thm:sd-weakness} and~\ref{thm:it-sd-weak}. We continue by showing
that the latter result, namely that Gold-style learners may be assumed partially
set-driven, transfers to all considered cases. We generalize the result by
\citet{SchRicht84} and \citet{Fulk90}. The idea here is to, just as in the
$\Ex_W$-case, mimic the given learner and to search for minimal locking
sequences. Incorporating the result of \citet{KP16} that unrestricted Gold-style
learners may be assumed total, we even get a stronger result.

\begin{theorem}
   For \(\delta,\delta' \in \set{\Cind, \T}\), we have that\label{ap_th:g-psd}
   \[{[\tau(\delta)\Txt\G\delta'\Ex_C]}\rec =
    {[\tau(\delta)\totalCp\Txt\Psd\delta'\Ex_C]}\rec.\]
\end{theorem}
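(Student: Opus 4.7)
The easy direction $\supseteq$: any $\Psd$-learner $h'$ is simulated by the $\G$-learner $h(\sigma) \coloneqq h'(\content(\sigma), |\sigma|)$, which produces the identical hypothesis sequence on every text; in particular totality and both the $\tau(\delta)$-style and the $\delta'$-style restriction carry over from $h'$ to $h$ unchanged.

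For the harder direction $\subseteq$, let $h$ witness $\mathcal{L} \in [\tau(\delta)\Txt\G\delta'\Ex_C]\rec$. Since $\Ex_C$ and $\Cind$ are delayable, we may assume $h \in \totalCp$: if $\delta = \Cind$, then $\tau(\Cind)$ already forces every output to be defined; if $\delta = \T$, we invoke the normal form of~\citet{KP16}, which applies because $\delta'\Ex_C$ is delayable. Define $h' \in \totalCp$ by $h'(D,n) \coloneqq h(\sigma_{D,n})$, where $\sigma_{D,n}$ is the $\leq$-minimal $\sigma \in D_\#^{\leq n}$ satisfying $h(\sigma) = h(\sigma\tau)$ for every $\tau \in D_\#^{\leq n}$, and $\sigma_{D,n} \coloneqq \varepsilon$ if no such sequence exists. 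The search ranges over a finite set and $h$ is total, so $h'$ is total.

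Every hypothesis emitted by $h'$ is of the form $h(\sigma)$ for some finite $\sigma$. Consequently, if $\delta = \Cind$, the restriction $\tau(\Cind)$ transfers directly from $h$ to $h'$. Moreover, on any text $T \in \Txt(L)$ with $L \in \mathcal{L}$, the selected $\sigma_{D,n}$ lies in $L_\#^*$ and hence can be realized as an initial segment of some text of $L$; so if $\delta' = \Cind$, then $h(\sigma_{D,n})$ is a $C$-index by the assumption on $h$, which establishes $\delta'$ on $h'$.

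It remains to verify convergence on a text $T \in \Txt(L)$. By the locking sequence lemma (which transfers to $\Ex_C$, as noted in Section~\ref{sub:normal_forms}), let $\sigma^*$ be the $\leq$-minimal locking sequence for $h$ on $L$ in $L_\#^*$; then $h(\sigma^*)$ is a correct $C$-index for $L$. For $n$ with $|\sigma^*| \leq n$ and $\content(\sigma^*) \subseteq \content(T[n])$, $\sigma^*$ is itself a valid candidate. Each of the finitely many $\sigma < \sigma^*$ with $\content(\sigma) \subseteq L$ fails to be a locking sequence (by minimality of $\sigma^*$), providing a witness $\tau_\sigma \in L_\#^*$ with $h(\sigma\tau_\sigma) \neq h(\sigma)$; taking $n$ large enough that $|\tau_\sigma| \leq n$ and $\content(\tau_\sigma) \subseteq \content(T[n])$ for all such $\sigma$ forces the test to fail on every one of them. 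Hence $\sigma_{\content(T[n]),n} = \sigma^*$ for all sufficiently large $n$, and $h'$ syntactically converges to the correct $C$-index $h(\sigma^*)$. The main obstacle is this stabilization step: one must simultaneously certify that the local test eventually isolates the genuine minimal locking sequence and that no intermediate output $h(\sigma_{D,n})$, drawn from the arbitrary candidate set $D_\#^{\leq n}$ rather than from initial segments of $T$ itself, violates the $\delta'$-requirement.
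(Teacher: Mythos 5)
Your proposal is correct and follows essentially the same route as the paper: normalize $h$ to be total via \citet{KP16}, have the partially set-driven learner search its finite input for the $\leq$-minimal sequence passing the local stability test, and argue via the minimal locking sequence that this search stabilizes, with the $\tau(\Cind)$/$\Cind$ requirements transferring because every output of $h'$ is $h(\sigma)$ for some $\sigma$ over the observed (hence target-language) data. The only cosmetic difference is your fallback value $h(\varepsilon)$ where the paper uses $\ind(\emptyset)$; both satisfy the relevant restrictions.
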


\begin{proof}
    We modify the proof as seen in~\citet{Fulk90}.
    The inclusion \({[\tau(\delta)\totalCp\Txt\Psd\delta'\Ex_C]}\rec \subseteq
    {[\tau(\delta)\Txt\G\delta'\Ex_C]}\rec\) follows immediately.
    For the other, let \(h\) be a
    \(\tau(\delta)\Txt\G\delta'\Ex_C\)-learner, which can
    assumed to be total by \citet{KP16} and let
    \(\mathcal{L} = \tau(\delta)\Txt\G\delta'\Ex_C(h)\cap\REC\). We define, for
    each finite set \(D \subseteq\N\) and \(t\in\N\),
    \begin{equation*}
        p(D, t) = \set{\sigma \in D^{\leq t} }{ \forall \tau \in D^{\leq t}\colon
            h(\sigma) = h(\sigma \tau)},
    \end{equation*}
    which, intuitively, contains potential locking sequences of \(h\). We define a
    \(\tau(\delta)\totalCp\Txt\Psd\delta'\Ex_C\)-learner \(h'\) for all finite sets
    \(D\) and \(t\in \N\) as
    \begin{equation*}
    h'(D, t) =
        \begin{cases}
              h(\min(p(D, t))), &\text{if } p(D, t) \neq \emptyset;\\
              \ind(\emptyset), &\text{otherwise.}
        \end{cases}
    \end{equation*}
    Note that \(h'\in \totalCp\) since \(h\in\totalCp\). To show that every
    language learned by \(h\) is also learned by \(h'\), let \(L \in
    \mathcal{L}\) and \(T \in \Txt(L)\). Let \(\sigma_0\) be a minimal
    locking sequence for \(h\) on \(L\). Let \(n_0\) be sufficiently large such
    that
    \begin{itemize}
        \item \(\content(\sigma_0) \subseteq \content(T[n_0])\),
        \item \(\abs{\sigma_0}\leq n_0\), and
        \item for all \(\sigma'\in{L}_\#^{*}\), with
            \(\sigma' < \sigma_0 \), there exists \(\tau\in
            {(\content(\sigma_0))}^{\leq n_0}_{\#}\) with \(h(\sigma')\neq
            h(\sigma'\tau)\).
    \end{itemize}

    Now, for
    all \(n \geq n_0\), we have \(\min(p(T[n], n)) = \sigma_0\) and, thus, \(h'\)
    outputs a correct hypothesis on \(T[n]\) which shows that \(L\in
    \tau(\delta)\totalCp\Txt\Psd\delta'\Ex_C(h')\).

    It remains to be shown that $h'$ preserves the restrictions imposed on $h$.
    This is clear whenever the restriction equals $\T$. For the remaining, we
    consider the following cases.
    \begin{enumerate}
      \itemin{1. Case:} $\delta=\Cind$. In this case, $h$ always outputs
          $C$-indices. Since $h'$ mimics $h$ or outputs $\ind(\emptyset)$, which
          also is an $C$-index, we have that $h'$ preserves $\delta$.
      \itemin{2. Case:} $\delta'=\Cind$. Let \(L\in\La\), \(T\in\Txt(L)\) and $n
          \in \N$. If $p(\content(T[n]), n) = \emptyset$, $h'(\content(T[n]),n)$
          outputs the $C$-index $\ind(\emptyset)$. Otherwise, if
          $p(\content(T[n]), n) \neq \emptyset$, let $\sigma =
          \min(p(\content(T[n]), n)) \in L^*$. Then, we have that
          $h'(\content(T[n]),n) = h(\sigma)$ which also is a $C$-index.\qedhere
    \end{enumerate} \noqed
\end{proof}

Also the former result of Theorem~\ref{Thm:FulkKS} holds in all considered
cases, as the same simulating argument (where one mimics the iterative learner
on ascending text with a pause symbol between two elements) suffices regardless
the exact setting. We provide the general result.

\begin{theorem}
    Let
    \(\delta,\delta'\in\set{\Cind,\True}\) and
    \(\mathcal{C}\in\set{\totalCp,\partialCp}\). Then, we have that\label{ap_thm:it_subset_sd}
    \begin{align*}{[\tau(\delta')\mathcal{C}\Txt\It\delta\Ex_C]}\rec\subseteq
    {[\tau(\delta')\mathcal{C}\Txt\Sd\delta\Ex_C]}\rec.\end{align*} 
\end{theorem}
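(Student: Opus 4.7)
The plan is to follow the classical template (\citet{Fulk90}, \citet{KS95}) for $\Txt\It \subseteq \Txt\Sd$: given an iterative learner $h$, construct a set-driven learner $h'$ by simulating $h$ on the canonical ascending sequence of its input set, interspersed with pause symbols. Concretely, for a finite $D = \{d_0 < \cdots < d_j\} \subseteq \N$, set $\sigma_D = (d_0, \#, d_1, \#, \ldots, \#, d_j)$ and define $h'(D) = h^*(\sigma_D)$, where $h^*$ is the $\G$-simulator of the iterative $h$. Since $\sigma_D$ is computable from $D$, $h'$ is partial computable, and totality transfers from $h$ to $h'$, so the choice of $\mathcal{C} \in \{\totalCp, \partialCp\}$ is respected.

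To verify $h'$ $\Sd \Ex_C$-learns each $L \in \mathcal{L}$, consider a canonical ascending text $\tilde T$ of $L$ (the sorted elements of $L$ with a $\#$ after each, padded by further $\#$'s when $L$ is finite). Since $h$ $\It$-learns $L$ on $\tilde T$, by \citet{BlumBlum75} there exists a prefix $\sigma_0$ of $\tilde T$ that is a locking sequence for $h$ on $L$; extending $\sigma_0$ further within $\tilde T$ (which preserves locking, as extensions by strings in $L_\#^*$ of locking sequences are locking) we may take $\sigma_0$ to end with a non-$\#$ element, so that $\content(\sigma_0) = \{\ell_0, \ldots, \ell_k\}$ is an initial segment of the sorted $L$. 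Then for any $D$ with $\content(\sigma_0) \subseteq D \subseteq L$, the smallest $k+1$ elements of $D$ are exactly $\ell_0, \ldots, \ell_k$, so $\sigma_D$ decomposes as $\sigma_0 \tau$ for some $\tau \in L_\#^*$; by the locking property $h^*(\sigma_D) = h^*(\sigma_0)$, which is a correct $C$-index for $L$. Hence on any $T \in \Txt(L)$, for cofinitely many $n$ we have $\content(T[n]) \supseteq \content(\sigma_0)$, so $h'(\content(T[n]))$ stabilizes at the correct hypothesis.

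It remains to verify preservation of the restrictions $\delta, \delta' \in \{\Cind, \True\}$. If $\tau(\delta') = \tau(\Cind)$, then $h^*(\sigma)$ is a $C$-index for every finite $\sigma$ (since $h$ outputs $C$-indices on every input), so $h'(D) = h^*(\sigma_D)$ is always a $C$-index. If $\delta = \Cind$, then for $D \subseteq L \in \mathcal{L}$ we extend $\sigma_D$ to a text of $L$; the $\Cind$-restriction on $h$ along this text forces $h^*(\sigma_D) = h'(D)$ to be a $C$-index. The main technical point is securing that $\sigma_D$ is a genuine $L_\#^*$-extension of the locking sequence — specifically, choosing $\sigma_0$ whose content is an initial segment of the sorted $L$ and which ends in a non-$\#$ element — so that the locking property can be invoked uniformly over all sufficiently large $D$.
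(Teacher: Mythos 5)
Your construction follows the same template as the paper's proof (simulate the iterative learner on the sorted input with interspersed pause symbols), but it omits a safeguard that the paper's version has, and this omission creates a genuine gap in the finite-language case. You define $h'(D) = h^*(\sigma_D)$ unconditionally, and your argument rests on the claim that the locking sequence $\sigma_0$ can be chosen as a prefix of the canonical text $\tilde T$ \emph{ending in a non-pause element}. For infinite $L$ this is fine, but for finite $L$ it can fail: an iterative learner may legitimately change its hypothesis upon reading $\#$, so every locking sequence that is a prefix of $\tilde T$ may be forced to extend past the last data element into the trailing pause symbols, and truncating back to a non-$\#$ ending does not preserve the locking property. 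Concretely, take an iterative $h$ that on datum $0$ (from the initial state) outputs some junk index $j$, and on state $j$ with input $0$ or $\#$ outputs $\ind(\{0\})$ and then never changes; this $h$ iteratively learns $\{0\}$, yet your $h'(\{0\}) = h^*((0)) = j$ is wrong, and on every text of $\{0\}$ your set-driven learner converges to $j$.

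The paper avoids this by testing whether $h^*(\sort_\#(D)) = h^*(\sort_\#(D)\concat\#)$ and, if the test fails, outputting $\ind(D)$ instead. This repairs exactly the problematic case: if $L$ is finite and the test fails at $D = L$, then $\ind(L)$ is a correct $C$-index for $L$; if the test passes, iterativeness guarantees $h$ has converged on $\sort_\#(L)\concat\#^\infty$ to $h^*(\sort_\#(L))$, which must therefore be correct. The fallback $\ind(D)$ is also a $C$-index, so the $\Cind$ restrictions are still preserved. The rest of your argument --- the decomposition $\sigma_D = \sigma_0\concat\tau$ with $\tau \in L_\#^*$ for infinite $L$, the totality transfer, and the preservation of $\delta, \delta'$ --- is sound, so adding this case distinction would complete your proof.
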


\begin{proof}
    We adapt the proof of~\citet{KS95}.
    Let a \(h\) be a learner and let
    \(\La=\tau(\delta')\mathcal{C}\Txt\It\delta\Ex_C(h)\).
    We show that the following learner \(h'\)
    \(\tau(\delta')\mathcal{C}\Txt\Sd\delta\Ex_C\)-learns \(\La\).
    To that end, for any set \(D\),
    let \(\sort_\#(D)\) be the sequence of the elements in
    \(D\) sorted in ascending order, with a \(\#\) between each two elements,
    and let \(h^*\) be the starred form of \(h\).
    Now, we define \(h'\) as, for all finite sets \(D\),
    \begin{align*}
        h'(D)=
            \begin{cases}
                h^*(\sort_\#(D)),&\cIf h^*(\sort_\#(D))=h^*(\sort_\#(D)\concat\#);\\
                \ind(D),&\otw.
            \end{cases}
    \end{align*}
    Note that \(h'\) outputs a \(C\)-index, whenever \(h\) does so or when it
    outputs \(\ind\). Thus, \(h'\) preserves the \(\Cind\)-restrictions of
    \(h\). Moreover, if \(h\) is total, then so is \(h'\).
    To show that \(h'\) learns \(\La\), let \(L\in \La\). If \(L\) is
    finite, then either \(h^*(\sort_\#(L))=h^*(\sort_\#(L)\concat\#)\), in which
    case \(h\) converges to \(h'(L)=h^*(\sort_\#(L))\) on text
    \(\sort_\#(L)\concat\#^\infty\). Otherwise, we have \(h'(L)=\ind(L)\). In both
    cases, \(h'\) learns \(L\) as \(h'(L)\) is a correct  \(C\)-index for \(L\).

    On the other hand, if \(L\) is infinite, then \(h\) must converge to a
    \(C\)-index for \(L\) on the text \(\sort_\#(L)\). Let \(\sigma_0\)
    be the start sequence of \(\sort_\#(L)\) after which \(h\) is
    converged and let \(D_0=\content(\sigma_0)\). Then, for all \(x\in
    L\setminus D_0\), we have
    \(h^*({\sigma_0}\concat x)=h^*(\sigma_0)=h^*({\sigma_0}\concat\#)\)  as \(h\) is
    iterative. Therefore, for all \(D'\) with
    \(D_0 \subseteq D' \subseteq L \), we have
    \(h^*(\sort_\#(D'))=h^*(\sort_\#(D')\concat\#)\) and
    \(h^*(\sort_\#(D'))=h^*(\sort_\#(D_0))\), which is a correct hypothesis for
    \(L\). As \(h'(D')=h(\sort_\#(D'))\), we have convergence of \(h'\) to a
    correct \(C\)-index for \(L\) and thus \(h'\) learns \(L\).
\end{proof}

Interestingly, totality is no restriction solely for Gold-style (and due to the
equality also partially set-driven) learners. For the other considered learners
with restricted memory, being total lessens the learning capabilities. This
weakness results from the need to output some guess. A partial learner can await
this guess and outperform it. This way, we obtain self-learning languages
\citep{CK16} to show each of the three following separations.

\begin{theorem} We have that
  \({[\totalCp\Txt\Sd\Cind\Ex_C]}\rec \subsetneq {[\Txt\Sd\Cind\Ex_C]}\rec\).
\end{theorem}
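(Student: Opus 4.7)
The inclusion $\subseteq$ is immediate, so the plan concerns the strict separation. The approach is to construct a class $\La \subseteq \REC$ witnessing it by the self-learning language technique of \citet{CK16}: a partial $\Sd\Cind\Ex_C$-learner $h$ can recognize languages in $\La$ by detecting self-referential codes embedded in the input and can remain undefined whenever no such code has yet appeared. Any total $\Sd\Cind\Ex_C$-learner $h'$, by contrast, is forced to commit to a $C$-index on every finite set, and because $h'(D)$ is a $C$-index we can decide $\varphi_{h'(D)}(x)$ for every $x$; this membership information is what allows diagonalization against $h'$.

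By the Operator Recursion Theorem I would simultaneously construct a partial learner $h$ and a computable family of indices so that each $L \in \La$ contains a self-identifying code $e_L$ from which $h$ extracts a characteristic index for $L$. On any finite set $D$ containing no valid code, $h$ diverges; when $D$ contains such a code, $h$ outputs the indicated $C$-index. Thus $h$ satisfies $\Cind$ on its domain and $\Sd\Ex_C$-learns every $L \in \La$, while the languages in $\La$ are recursive by construction.

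For the non-learnability by total learners, I assume towards a contradiction that $h' \in \totalCp$ is a $\Sd\Cind\Ex_C$-learner for $\La$ and apply $\ORT$ a second time with $h'$ as parameter to build a target $L^* \in \La$ defeating $h'$. I enumerate $L^*$ as the union of an increasing chain $D_0 \subset D_1 \subset \cdots$ whose first stages install the self-identifying code that places $L^*$ in $\La$, while subsequent stages diagonalize: at stage $n$, evaluate the total function $\varphi_{h'(D_n)}$ on a fresh witness $x_n$ and extend $D_{n+1}$ so that $x_n \in L^*$ if and only if $\varphi_{h'(D_n)}(x_n) = 0$. If $h'$ converges on this text to a $C$-index $j$, then by construction $\varphi_j$ disagrees with $\chi_{L^*}$ on some witness, contradicting $\Ex_C$-convergence; if it never converges, we directly violate $\Ex_C$.

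The main obstacle is keeping $L^*$ simultaneously (a) recursive, (b) inside $\La$ via an intact self-identifying code, and (c) a valid diagonalization target against every possible limit value of $h'$. These pull in different directions, since the code must remain in $L^*$ permanently while later additions still contradict $h'$'s commitments. This is exactly where the delayed-commitment structure of the self-learning construction matters: the $\ORT$ fixed point lets us choose the code $e_{L^*}$ in advance to depend on $h'$, so the code can encode the entire diagonalization schedule while still being recognizable by $h$.
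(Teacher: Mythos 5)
Your overall architecture (self-learning class via \ORT, partial learner that waits, second application of \ORT against the assumed total learner $h'$) matches the paper's, but your diagonalization step has a genuine gap. You justify evaluating $\varphi_{h'(D_n)}(x_n)$ by saying that ``because $h'(D)$ is a $C$-index we can decide $\varphi_{h'(D)}(x)$''. That is not given: $\Cind$ here is a $\delta$-restriction, not a $\tau$-restriction, so $h'$ is only obliged to output $C$-indices on texts of languages it actually learns from $\La$. On an arbitrary finite set $D_n$, $h'(D_n)$ may be an index of a partial function and $\varphi_{h'(D_n)}(x_n)$ may diverge, in which case your enumeration of $L^*$ stalls and $L^*$ collapses to a finite set that need not lie in $\La$ --- then no contradiction with $h'$ is obtained. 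The obvious patch, arranging that every $D_n$ in your chain is itself a language in $\La$ (so that $h'$ must output $C$-indices on all its subsets), backfires: your $h$ is set-driven, so to learn each finite $D_n$ it must output a $C$-index for $D_n$ on input $D_n$, and then it cannot also converge along the chain to a $C$-index for the infinite union $L^*$. These two requirements are irreconcilable for a single set-driven learner, so case (ii) of your plan cannot be realized once case (i) is repaired.

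The paper avoids this entirely by never evaluating any $\varphi_{h'(D)}(x)$. It only compares hypotheses of the total learner for \emph{syntactic equality}: with $c(n)=\content(T[n])$, the \ORT construction sets $\varphi_{T(n)}(0)=e$ (a $C$-index for the infinite $\range(T)$) as long as $h'(c(n'+1))\neq h'(c(n'+2))$ for all $n'\leq n$, and switches to $\ind(c(n+1))$ otherwise. If $h'$ changes its mind forever along the chain, the infinite language is in $\La$ and $h'$ diverges on it; if $h'$ ever repeats, the two nested finite sets $c(n_0+1)\subsetneq c(n_0+2)$ are both placed in $\La$, and the set-driven $h'$ assigns them the same hypothesis, so it fails on at least one. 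You should rebuild your diagonalization around a test that is computable from totality alone (equality of outputs), rather than one that requires decoding $h'$'s outputs as characteristic functions.
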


\begin{proof}
    The inclusion \({[\totalCp\Txt\Sd\Cind\Ex_C]}\rec \subseteq
    {[\Txt\Sd\Cind\Ex_C]}\rec\) is straightforward. Suppose, by way of
    contradiction, that \( {[\totalCp\Txt\Sd\Cind\Ex_C]}\rec =
    {[\Txt\Sd\Cind\Ex_C]}\rec\). Let \(h\) be a learner
    such that, for all finite sets \(D \subseteq \N\)
    \begin{align*}
        h(D)=
            \begin{cases}
                \varphi_{\max(D)}(0), &\cIf D\neq\emptyset;\\
                \ind(\emptyset), &\otw.
            \end{cases}
    \end{align*}
    We now show that \(\mathcal{L} = \Txt\Sd\Cind\Ex_C(h)\cap\REC\) is a separating class
    contradicting the assumption that both classes are equally powerful. To that
    end,
    assume there exists a total learner \(h'\) with \(\mathcal{L}
    \subseteq \totalCp\Txt\Sd\Cind\Ex_C(h')\).
    By the Operator Recursion Theorem (\(\ORT\))  there exist an index \(e\in
    \N\), a strictly monotonically increasing function \(T \in \totalCp\) and
    \(c\in\totalCp\) such
    that, for all \(n, x\in \N\),
    \begin{align*}
        L  &= \range(T); \\
        \varphi_{e}&=\chi_{L};\\
        c(n)&=\content(T[n]); \\
        \varphi_{T(n)}(x) &= 
        \begin{cases}
            e,              &\text{if } \forall n' \leq n \colon h'(c(n'+1))
                \neq h'(c(n'+2));\\
            \ind(c(n+1)),   &\text{otherwise.}
        \end{cases}
    \end{align*}
    Note that there is a \(C\)-index for \(\range(T)\) because \(T\) is strictly
    monotonically increasing.
    Intuitively, if \(h'\) always makes mind changes on the start of the text
    \(T\), then \(\varphi_{T(n)}\) is a function that constantly outputs an index
    for a infinite set,  and otherwise, if \(h'\) repeats a hypothesis, then
    \(\varphi_{T(n)}\) is constantly an index for a
    finite set. 
    
    We now show that there exists a language that is learned by \(h\) but not by
    \(h'\).
    For this purpose, we consider the following cases.
    \begin{enumerate}
        \itemin{Case 1:} \(\forall n\colon h'(c(n+1)) \neq h'(c(n+2))\).
            In this case \(L \in \mathcal{L}\) holds because \(h\) will always
            output \(e\) on every sequence of a text for \(L\), which is a
            correct \(C\)-index for \(L\). But \(h'\) makes infinitely many mind
            changes on text $T$ and thus \(L
            \nsubseteq\totalCp\Txt\Sd\Cind\Ex_C(h')\).
    
        \itemin{Case 2:} \(\exists n\colon h'(c(n+1)) = h'(c(n+2))\).
            Let \(n_0\) be the smallest such \(n\). Then, \(h\) learns the
            languages \(c(n_0+1)\) and \(c(n_0+2)\) because the maximum of these
            sets is \(T(n_0)\) and \(T(n_0+1)\), respectively. Thus, \(h\) will
            output the correct hypothesis \(\ind(c(n_0+1))\) or
            \(\ind(c(n_0+2))\), respectively. But \(h'\) cannot differentiate
            between those two different languages. Thus, it learn both
            simultaneously.  Therefore, we again have \(\La \nsubseteq
            \totalCp\Txt\Sd\Cind\Ex_C(h')\).\qedhere
    \end{enumerate} \noqed
\end{proof}

\begin{theorem}
  We have that \({[\totalCp\Txt\It\Cind\Ex_C]}\rec \subsetneq
  {[\Txt\It\Cind\Ex_C]}\rec\).
\end{theorem}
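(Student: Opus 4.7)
The plan is to mirror the strategy of the preceding set-driven separation via a self-learning class built with the \tORT. I would take as separating partial iterative learner the \(h \in \partialCp\) defined by \(h(\varepsilon) = \ind(\emptyset)\), \(h(p, \#) = p\), and \(h(p, x) = \varphi_{x}(p)\) for \(x \in \N\), so that each datum \(x\) freely transforms the running hypothesis via its own program \(\varphi_{x}\), with genuine partiality whenever \(\varphi_{x}(p)\diverges\). Let \(\mathcal{L} := \Txt\It\Cind\Ex_C(h) \cap \REC\), and suppose for contradiction that some \(h' \in \totalCp\) satisfies \(\mathcal{L} \subseteq \totalCp\Txt\It\Cind\Ex_C(h')\); since \(h'\) is total, its iterative state \(s(n) := \It(h', T)(n)\) along any constructed text \(T\) is uniformly computable in \(n\).

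By \ORT, I would simultaneously build an index \(e\), a strictly increasing \(T \in \totalCp\) with \(L := \range(T)\) satisfying \(\varphi_{e} = \chi_{L}\), and programs \(\varphi_{T(n)}\) that depend self-referentially on \(h'\)'s trajectory: writing \(c(n) := \content(T[n])\), set \(\varphi_{T(n)}(p) := e\) for every \(p\) as long as no \(n' \le n\) witnesses \(s(n'+1) = s(n'+2)\); otherwise, with \(n_0\) the least such witness, tune \(\varphi_{T(n_0)}\) and \(\varphi_{T(n_0+1)}\) to drive \(h\)'s state to \(\ind(c(n_0+1))\) and \(\ind(c(n_0+2))\), respectively. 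This is the exact analogue of the clause used in the set-driven proof, adapted to the iterative interaction operator.

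The case analysis then parallels the preceding theorem. If \(h'\) never repeats on \(T\), every \(\varphi_{T(n)}(p)\) equals \(e\), so \(h\) converges to \(e\) on every text of \(L\), placing \(L \in \mathcal{L}\); but \(h'\) performs infinitely many mind changes on \(T \in \Txt(L)\), contradicting \(\Ex_C\). If \(n_0\) exists, the determinism of iterative learning yields \(\It(h', T[n_0+1]\concat\sigma) = \It(h', T[n_0+2]\concat\sigma)\) for every \(\sigma \in \Seq\), so \(h'\) outputs the same final hypothesis on the canonical texts \(T[n_0+1]\concat\#^{\infty}\) and \(T[n_0+2]\concat\#^{\infty}\) of the distinct finite languages \(L_1 := c(n_0+1)\) and \(L_2 := c(n_0+2)\), both of which lie in \(\mathcal{L}\) by construction, and hence \(h'\) misidentifies one.

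The main obstacle, absent in the set-driven argument, is to ensure that \(L_1\) and \(L_2\) truly lie in \(\mathcal{L}\) under \emph{every} text ordering: a permutation revisiting some \(T(k)\) with \(k < n_0\) \emph{after} the pivot step \(T(n_0)\) has pushed \(h\)'s state to \(\ind(c(n_0+1))\) would, via the ``Case 1'' clause \(\varphi_{T(k)}(p) = e\), naively reset the state back to \(e\). I would resolve this by refining each \(\varphi_{T(k)}\) to first test whether its input \(p\) already equals the pivot index \(\ind(c(n_0+1))\) or \(\ind(c(n_0+2))\) (encoded, e.g., via a reserved padding) and, if so, act as the identity; with this absorbing tweak the pivot operations dominate regardless of order, the no-repeat branch is unaffected, and the rest of the argument follows the template of the set-driven proof.
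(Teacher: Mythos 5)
Your overall architecture matches the paper's: a self-learning class for a partial iterative learner, an \ORT construction producing a strictly increasing $T$ whose elements code either a $C$-index $e$ for $\range(T)$ (while $h'$ keeps changing its state) or indices for the finite languages $c(n_0+1)$, $c(n_0+2)$ (once $h'$ repeats a state), and the same two-case contradiction. You also correctly isolate the difficulty that is absent in the set-driven version, namely that on an arbitrary text a non-pivot datum $T(k)$, $k<n_0$, revisited after the pivot would reset $h$'s state to $e$.

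However, your ``absorbing tweak'' does not close that gap, because it only arbitrates between pivot and non-pivot data, not between the \emph{two pivots themselves}. On a text of $L_2=c(n_0+2)$ both $T(n_0)$ and $T(n_0+1)$ occur, in arbitrary order and with arbitrary repetitions. If the pivots unconditionally install their own indices, the text $T[n_0+2]\concat T(n_0)^\infty$ leaves $h$ stuck at $\ind(c(n_0+1))$; if instead the pivots also ``act as the identity'' on any already-installed pivot index, then on the canonical increasing text of $L_2$ the datum $T(n_0)$ installs $\ind(c(n_0+1))$ first and $T(n_0+1)$ absorbs it. Either way $h$ fails $\Ex_C$ on some text of $L_2$, so $L_2\notin\La$ and the Case-2 contradiction (which needs \emph{both} finite languages in $\La$) evaporates. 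What is missing is an asymmetric priority making $T(n_0+1)$ dominate $T(n_0)$, which in turn dominates the non-pivots, uniformly over all orderings. The paper builds exactly this into the learner itself: hypotheses have the form $\pad(e,k)$, each datum $x$ carries a static pair $\varphi_x(0)=\langle \text{index},\text{priority}\rangle$, and $h$ updates only when the datum's priority is at least the current $k$; the pivots $T(n_0)$ and $T(n_0+1)$ receive priorities $n_0$ and $n_0+1$, the non-pivots priority $0$. You could equally well implement this priority comparison inside your $\varphi_{T(k)}$'s (tagging each installed hypothesis with the index of the datum that installed it), but some such totally ordered tie-breaking is needed; a symmetric ``first pivot wins'' rule is not enough.
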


\begin{proof}
  The inclusion \({[\totalCp\Txt\It\Cind\Ex_C]}\rec \subseteq
  {[\Txt\It\Cind\Ex_C]}\rec\) follows immediately. We prove that we have a proper
  inclusion by providing a separating class using the Operator Recursion
  Theorem (\(\ORT\)). Suppose now, by way of contradiction,
  that \( {[\totalCp\Txt\It\Cind\Ex_C]}\rec = {[\Txt\It\Cind\Ex_C]}\rec\). Let \(h\) be
  a \(\Txt\It\Cind\Ex_C\)-learner such that \(h(\varepsilon)=\pad(0,0)\) and,
  for all \(e, k, x\in \N\),
  \begin{equation*}
    h(\pad(e, k), x) = 
        \begin{cases}
            \divs,                       &\text{if } \varphi_x(0)\divs;\\
            \pad(e,k),                      &\text{if } k > \pi_2(\varphi_x(0));\\
            \pad(\pi_1(\varphi_x(0)),\pi_2(\varphi_x(0))),    &\text{otherwise.}
        \end{cases}
    \end{equation*}
    Recall that \(\pi_1,\pi_2\) are the inverse functions to the pairing
    function \(\langle \cdot,\cdot \rangle \).
    Intuitively, \(h\) interprets each datum \(x\) as the index of a function and
    outputs the first component of \(\varphi_{x}(0)\) where the second component
    of  \(\varphi_{x}(0)\) is maximal.
    Now, let \(\mathcal{L} = \Txt\It\Cind\Ex_C(h)\cap\REC\). By our assumption there is a
    \(\totalCp\Txt\It\Cind\Ex_C\)-learner \(h'\) that learns \(\mathcal{L}\).
    For notational convenience, we use the starred learner \({(h')}^{*}\). With
    the \(\ORT\) there exist an index \(e\in \N\)
    and a strictly monotonically increasing \(T \in
    \totalCp\) such that, for all \(n, x\in \N\),
        \begin{align*}
            L \coloneqq C_e &= \range(T); \\
            \varphi_{T(n)}(x) &= 
                \begin{cases}
                    \langle e, 0 \rangle, &\begin{aligned}
                        \text{if } \forall n' \leq n\colon {(h')}^*(T[n']) \neq
                        {(h')}^*(T[n'+1])\ \lor \\ {(h')}^*(T[n'+1]) \neq
                        {(h')}^*(T[n'+2]);
                        \end{aligned}\\
                    \langle \ind(\content(T[n+1])), n \rangle,&\text{otherwise.}
                \end{cases}
        \end{align*}
    Note that we can find a \(C\)-index for \(\range(T)\) because \(T\) is strictly
    monotonically increasing.
    We now consider the following cases.
    \begin{itemize}
        \itemin{Case 1:} \(\forall n\in \N\colon {(h')}^*(T[n]) \neq
            {(h')}^*(T[n+1])\ \lor\ {(h')}^*(T[n+1]) \neq {(h')}^*(T[n+2])\). On
            any element $x \in L$, $h$ outputs \(\pad(e,0)\), which is a correct
            \(C\)-Index for \(L\). Thus, once $h$ sees the first non-pause
            symbol, it converges correctly and, thus, \(L\in \La\).  But \(h'\)
            makes infinitely many mind changes on text $T$ and thus cannot learn
            \(\mathcal{L}\).
      
        \itemin{Case 2:} \(\exists n\in \N\colon {(h')}^*(T[n]) =
            {(h')}^*(T[n+1])\ \land\ {(h')}^*(T[n+1]) = {(h')}^*(T[n+2])\).  Let
            \(n_0\) be the smallest such \(n\). Then, \(h\) learns the finite
            languages \(\content(T[n_0+1])\) and \(\content(T[n_0+2])\) because
            \(\varphi_{T(n_0)}(0)\) and \(\varphi_{T(n_0+1)}(0)\) have the
            maximum second component in the respective set and, thus, \(h\)
            converges to \(\pad(\ind(\content(T[n_0+1])), n_0)\) and
            \(\pad(\ind(\content(T[n_0+2])), n_0+1)\), respectively. But by the
            assumption of this case, \(h'\) converges to same hypothesis on the
            texts \(T[n_0] \concat T{(n_0)}^{\infty}\) and \(T[n_0 + 1] \concat
            T{(n_0 + 1)}^{\infty}\), which are texts of different languages.
            Thus, \(h\) cannot learn \(\mathcal{L}\).\qedhere
    \end{itemize} \noqed
\end{proof}

\begin{theorem}
  We have that ${[\totalCp\Txt\Td\Cind\Ex_C]}\rec \subsetneq
  {[\Txt\Td\Cind\Ex_C]}\rec$.
\end{theorem}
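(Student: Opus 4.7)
The forward inclusion is immediate. For strictness, imitating the preceding two separations, I define a partial transductive learner $h$ by setting $h(\#)=\mbox{?}$ and $h(x)=\varphi_{x}(0)$ for all $x\in\N$. Since $\Td$ retains no state, the $\Td$-sequence on a text $T$ takes, at each step past the first, the value $h(T(i-1))$ whenever that value is not $\mbox{?}$, so the sequence $\Ex_C$-converges on every text of $L$ precisely when there is a single $C$-index $e_L$ of $L$ with $\varphi_{x}(0)=e_L$ for every $x\in L$. Thus $\mathcal{L}:=\Txt\Td\Cind\Ex_C(h)\cap\REC$ consists exactly of the recursive languages admitting such a uniform characteristic index on all of their elements.

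Assume for contradiction that some $h'\in\totalCp$ also $\Td\Cind\Ex_C$-learns $\mathcal{L}$. For $h'$ to $\Td$-converge syntactically on the texts of $L\in\mathcal{L}$, all non-$\mbox{?}$ values of $h'$ on the elements of $L$ must agree on one common $C$-index $z^{*}_{L}$ of $L$; any two distinct non-$\mbox{?}$ values would be exposed by a text alternating the two elements. By \ORT{} I construct, self-referentially, a strictly increasing $T\in\totalCp$ drawn from an infinite padded family of indices all mapping $0$ to an index $e$ with $C_{e}=\range(T)$, so that $\varphi_{T(n)}(0)=e$ for every $n$ and thus $L:=\range(T)\in\mathcal{L}$. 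The $T(n)$'s are chosen adaptively from $h'$'s (computable) responses, splitting into two cases. In Case (a) there exist padded candidates $y,y'$ with $h'(y),h'(y')$ distinct and both non-$\mbox{?}$; including both as $T$-values forces a text of $L$ on which $\Td(h',T)$ oscillates, so $h'$ fails. In Case (b) the entire padded family receives $h'$-values in $\{z^{*},\mbox{?}\}$ for one fixed $z^{*}$; then use the remaining \ORT{} freedom to ensure $C_{z^{*}}\neq\range(T)$ — either a padded $y$ with $\varphi_{z^{*}}(y)=0$ exists and is included in $\range(T)$, or no such $y$ exists and the padded family is contained in $C_{z^{*}}$, in which case we simply take $L$ to be a proper infinite subset of the padded family, yielding again $C_{z^{*}}\neq L$ and a contradiction.

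The principal obstacle is weaving this two-case diagonalization into a single \ORT{} invocation: the self-reference must simultaneously fix an index $e$ indexing the eventually chosen $\range(T)$, guarantee computability of the adaptive selection of the $T(n)$'s, and, in Case (b), produce the witnessing padded element whose $\varphi_{z^{*}}$-value distinguishes $L$ from $C_{z^{*}}$ — this last ingredient rests on the infinitude of padded indices together with the totality and $0/1$-valuedness of $\varphi_{z^{*}}$, guaranteed by the $\Cind$ restriction on $h'$. The overall construction is the transductive analogue of the preceding two separations for $\Sd$ and $\It$, with pointwise agreement of $h'$ on individual elements playing the role of the set-level and iterative convergence conditions.
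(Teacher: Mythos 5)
Your setup (the self-learning $\Td$-learner $h(x)=\varphi_x(0)$, diagonalizing against a total $h'$ via \ORT) is the right spirit, but the construction you sketch has a genuine unresolved gap, and you say so yourself: ``the principal obstacle is weaving this two-case diagonalization into a single \ORT{} invocation.'' That obstacle is real and your sketch does not overcome it. Your case split --- (a) ``two padded candidates with distinct non-? values exist'' versus (b) ``all non-? values agree on one $z^{*}$,'' with (b) further split on whether some padded $y$ has $\varphi_{z^{*}}(y)=0$ --- is a $\Sigma_1$/$\Pi_1$ alternation that the \ORT-defined functions cannot decide. Yet the construction must commit, effectively and totally, to a characteristic index $e$ with $C_e=\range(T)$ for an \emph{infinite}, adaptively built $L$, and in case (b2) it must even know to switch $L$ to ``a proper infinite subset of the padded family.'' A wait-and-switch strategy does not obviously repair this, because $\varphi_e$ must already have answered membership queries for small elements before the case is resolved, and you cannot retroactively move elements in or out of $\range(T)$. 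You also silently assume a $z^{*}$ exists in case (b); if $h'$ answers ? on the whole family there is none (that subcase is easy, but it is missing).

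The paper's proof avoids all of this by using \emph{finite} languages. With \ORT{} it builds just a family $a\in\totalCp$ with $\varphi_{a(n)}(x)=\ind(\{a(0),a(1)\})$ if $h'(a(0))\neq h'(a(1))$ and $\varphi_{a(n)}(x)=\ind(\{a(n)\})$ otherwise. Since $h'$ is total, the single test $h'(a(0))\neq h'(a(1))$ is decidable by one computation, so the \ORT{} functions are trivially total and the required $C$-indices are just $\ind$ of finite sets. If the test succeeds, $h$ learns $\{a(0),a(1)\}$ but $h'$ oscillates on $(a(0)a(1))^\infty$; if it fails, $h$ learns both $\{a(0)\}$ and $\{a(1)\}$ but $h'$ cannot distinguish $a(0)^\infty$ from $a(1)^\infty$. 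You should look for the weakest witnesses that do the job: the separation needs only that $h$ learns \emph{something} $h'$ cannot, and two elements suffice, whereas your infinite $\range(T)$ forces exactly the effectivity problems you ran into.
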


\begin{proof}
    The inclusion ${[\totalCp\Txt\Td\Cind\Ex_C]}\rec \subseteq
    {[\Txt\Td\Cind\Ex_C]}\rec$ follows immediately. To prove that the inclusion is proper,
    we provide a separating class using the \tORT. Let \(h\) be a
    \(\Td\)-learner with \(h(\#)=\mbox{?}\) and, for all \(x\in \N\),
    \(h(x)=\varphi_{x}(0)\). Let \(\La=\Txt\Td\Cind\Ex_C(h)\). Now, assume there
    exists a learner \(h'\) with  \(\La \subseteq \totalCp\Txt\Td\Cind\Ex_C(h')\).
    Then, with \ORT{} there exists \(a\in \totalCp\) such that for all \(x,n\in
    \N\)
    \begin{align*}
        \varphi_{a(n)}(x)= 
        \begin{cases}
            \ind(\set{a(0),a(1)}), &\cIf h'(a(0))\neq h'(a(1));\\
            \ind(\set{a(n)}), &\otw.
        \end{cases}
    \end{align*}
    Intuitively, if \(h'\) suggests different hypotheses for \(a(0)\) and \(a(1)\)
    then both are in the same language and vice versa. We now show that in both
    cases, there is a language learned by \(h\) which cannot be learned by
    \(h'\). We distinguish the following cases.
    \begin{itemize}
        \itemin{Case 1:} \(h'(a(0))\neq h'(a(1))\). Then, we have \(
            \set{a(0),a(1)}\in
            \La\), as \(h\) outputs a \(C\)-Index for this set on both elements
            of the set. But \(h'\) does not converge on the text
            \({(a(0)a(1))}^\infty\) and thus cannot learn this set.
        \itemin{Case 2:} \(h'(a(0))= h'(a(1))\). Then, by construction, we have
            \( \set{a(0)}, \set{a(1)} \in \La\). But \(h'\) suggests the same
            hypothesis on \({a(0)}^\infty\) and \({a(1)}^\infty\) and thus can
            learn at most one of these two sets. \qedhere
    \end{itemize} \noqed
\end{proof}

Next, we show the gradual decrease of learning power the more we require the
learners to output characteristic indices. We have already seen in
Theorem~\ref{ap_th:c-w} that converging to $C$-indices lessens learning power.
However, this allows for more learning power than outputting these indices
during the whole learning process as shows the next theorem. The idea is that
such learners have to be certain about their guesses as these are indices of
characteristic functions. When constructing a separating class using
self-learning languages \citep{CK16}, one forces the $\Cind$-learner to output
$C$-indices on certain languages to, then, contradict its choice there. This
way, the $\Ex_C$-learner learns languages the $\Cind$-learner cannot. The
following theorem holds.

\begin{theorem}
  We have that \({[\Txt\It\Ex_C]}\rec \setminus {[\Txt\G\Cind\Bc_C]}\rec \neq \emptyset\).\label{ap_th:cind}
\end{theorem}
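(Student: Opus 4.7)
The plan is to exhibit a class learnable by a single iterative $\Ex_C$-learner that diagonalizes against any total Gold-style $\Cind\Bc_C$-learner, using the Operator Recursion Theorem in the now-familiar ``self-learning language'' style. The key observation I would exploit is that a $\Cind$-learner must output, on \emph{every} input sequence, an index $k$ for a total $\{0,1\}$-valued function; consequently $\varphi_k(x)$ is always defined, and we can computably query what $h'$ ``believes'' about any potential element $x$ of the language, regardless of whether $h'$ has syntactically stabilized.

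Concretely, I would first define the iterative learner $h$ by $h(\varepsilon) = \ind(\emptyset)$, $h(\text{prev}, \#) = \text{prev}$, and $h(\text{prev}, x) = \pi_1(x)$ otherwise, and let $\La = \Txt\It\Ex_C(h) \cap \REC$. Assuming for contradiction that some $h'$ $\Txt\G\Cind\Bc_C$-learns $\La$, I would apply $\ORT$ to produce indices $e, p$ such that $T \coloneqq \varphi_p$ is a text and $\varphi_e$ is total $\{0,1\}$-valued, satisfying: $T(0) = \langle e, 0 \rangle$ and $\varphi_e(\langle e, 0 \rangle) = 1$; for every $n \geq 1$, having defined $T[n]$, compute $b_{n-1} \coloneqq \varphi_{h'(T[n])}(\langle e, n \rangle)$ (which halts, since $h'(T[n])$ is a $C$-index), and then set $\varphi_e(\langle e, n \rangle) = 1 - b_{n-1}$ together with $T(n) = \langle e, n \rangle$ if $b_{n-1} = 0$ and $T(n) = \#$ otherwise; for every $x$ not of the form $\langle e, k \rangle$, let $\varphi_e(x) = 0$. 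This recursion is well-founded: computing $T[n]$ only needs $T(0), \dots, T(n-1)$ and the external oracles for $h'$ and $\varphi$, and $\varphi_e$ is defined pointwise by unfolding the same recursion, so $e$ is a genuine $C$-index.

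To finish, I would verify $L \coloneqq C_e \in \La$: $L$ is recursive with characteristic index $e$; the content of $T$ equals $L$ so $T \in \Txt(L)$; and on any text of $L$ every non-pause datum has the form $\langle e, k \rangle$, whence $h$ settles on $\pi_1(\langle e, k \rangle) = e$ from the first non-pause position onward, which is a correct $C$-index. Conversely, by construction $\varphi_{h'(T[n])}(\langle e, n \rangle) \neq \varphi_e(\langle e, n \rangle)$ for every $n \geq 1$, so $h'(T[n])$ is never a correct $C$-index for $L$, contradicting $\Bc_C$-convergence on $T$. The main obstacle I anticipate is the bookkeeping in the $\ORT$ application, namely verifying that the simultaneous recursion defining $T$ and $\chi_L$ really yields computable \emph{total} functions --- crucially, that $\varphi_e$ remains total and $\{0,1\}$-valued no matter how $h'$ behaves, so that $\La \subseteq \REC$ and the diagonalization against a \emph{$\Cind$} learner is legitimate.
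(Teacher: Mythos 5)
There is a genuine gap, and it lies exactly where you flagged your ``main obstacle.'' Your argument rests on the claim that $\varphi_{h'(T[n])}(\langle e,n\rangle)$ halts ``since $h'(T[n])$ is a $C$-index.'' But the restriction in $\Txt\G\Cind\Bc_C$ is the $\delta$-part of the criterion, not the $\tau(\alpha)$-part: $h'$ is only required to output $C$-indices on texts of languages it actually learns (i.e.\ languages in $\La$), not on every input sequence. Whether the segments $T[n]$ of your diagonalizing text are initial segments of such texts is precisely what is circular in your construction: if $h'$ ever outputs a non-$C$-index on some $T[n]$, then $b_{n-1}$ may diverge, $T$ stalls as a finite sequence, $\varphi_e$ is partial, and the finite content accumulated so far is \emph{not} in $\La$ (your $h$ outputs $e$ on it, which is then not a $C$-index, and in any case not an index for that finite set), so $h'$ was never obliged to behave well there and no contradiction follows. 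What you have written is essentially the paper's proof of the \emph{other} separation, $[\totalCp\Txt\Td\Cind\Ex_C]\rec \setminus [\tau(\Cind)\Txt\G\Bc_C]\rec \neq \emptyset$ (Theorem~\ref{ap_th:tau-cind}), where the hypotheses on arbitrary inputs really are guaranteed to be $C$-indices.

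The missing idea is to force the $\Cind$ requirement to bite on the diagonalizing text by seeding $\La$ with auxiliary languages. The paper's proof puts into $\La$, for every $n$, a finite language $L_n = \content(\tilde a[n]) \cup \{\langle a(n),1\rangle\}$ containing the $n$-th initial content of the intended text plus a marker element; the iterative learner is designed (via the second components $1$ and $2$) to learn each $L_n$ by eventually outputting $\ind(L_n)$. Since every finite initial content of the diagonalizing text is a subset of some $L_n \in \La$, every such segment extends to a text of a learned language, and therefore $h'$ \emph{must} output a $C$-index on it; only then is $\varphi_p$ total, $L$ recursive, and the pointwise diagonalization against $\varphi_{h'(T[x])}$ legitimate. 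Your learner $h(\mathrm{prev},x)=\pi_1(x)$ cannot support this, because it has no mechanism for learning the finite prefix languages; you would need to redesign both $h$ and the class along the paper's lines.
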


\begin{proof}
  We prove this by contradiction by providing a class of languages in
  ${[\Txt\It\Ex_C]}\rec$ which is not in ${[\Txt\G\Cind\Bc_C]}\rec$. Let $h$ be
  the following $\It$-learner. Let $p_\N$ be an index for the set of all natural
  numbers. For any $e, x \in \N$, we define
  \begin{align*}
    h(\varepsilon) &= \ind(\emptyset); \\
    h(e,x) &= 
    \begin{cases}
        e, &\falls \pi_2(e) = 1 \wedge \pi_2(x) = 1 \wedge \pi_1(x) < \pi_1(e); \\
        \langle \pi_1(x), 1\rangle, &\sonstfalls \pi_2(e) \neq 1 \wedge \pi_2(x) = 1; \\
        \langle \pi_1(x), 2\rangle, &\sonstfalls \pi_2(e) = 0 \vee (\pi_2(x)=2 \wedge \pi_1(x) < \pi_1(e)); \\
        e, &\sonst.
    \end{cases}
  \end{align*}
  Without loss of generality, we may assume that $\ind(\emptyset)  = \langle 0,0
  \rangle$. This way, we can distinguish whether it was the previous hypothesis
  or not. Intuitively, while $h$ only sees elements with second component two,
  it outputs the minimal $\langle \pi_1(x), 2 \rangle$ it has seen. Once it sees
  an element with second component one, it outputs the coded tuple $\langle
  \pi_1(x), 1\rangle$, which, if no other such elements are presented, is its
  final hypothesis. Otherwise, $h$ outputs the minimal $\langle \pi_1(x), 1 \rangle$.
  Now, let
  \(\mathcal{L} = \Txt\It\Ex_C(h) \cap \textbf{REC}\) and assume there exists a
  learner $h'$ which $\Txt\G\Cind\Bc_C$-learns $\La$, that is, $\La \subseteq
  \Txt\G\Cind\Bc_C(h')$. By the Operator Recursion Theorem (\ORT), there exist
  total computable increasing
  functions $a, \tilde{a} \in \totalCp$ and indices $e, p \in \N$
  such that for all $n, x \in \N$
  \begin{align*}
    \tilde{a}(x) &= \langle a(x), 2 \rangle; \\
    L_n &\coloneqq \content(\tilde{a}[n]) \cup \{ \langle a(n), 1 \rangle \}; \\
    L \coloneqq C_e &= \range(\varphi_p); \\
    T(x) \coloneqq \varphi_p(x) &=
        \begin{cases}
          \langle a(2x), 2 \rangle , &\falls \varphi_{h'(\varphi_p[x])}(\langle a(2x), 2 \rangle) = 0; \\
          \langle a(2x+1), 2 \rangle, &\sonstfalls \varphi_{h'(\varphi_p[x])}(\langle a(2x), 2 \rangle) = 1; \\
          \divs, &\sonst.
        \end{cases} \\
    \varphi_{\langle a(n), 2 \rangle}(x) &= \varphi_e(x) = 
        \begin{cases} 
          1, &\falls \langle a(2x), 2 \rangle \in \content(T[x+1]);\\
          0, &\sonstfalls \langle a(2x+1), 2 \rangle  \in \content(T[x+1]); \\
          \divs, &\sonst.
        \end{cases} \\
    \varphi_{\langle a(n), 1 \rangle}(x) &= \varphi_{\ind(L_n)}(x) = \chi_{L_n}(x);
  \end{align*}
  First, note that, for any $n \in \N$, $h$ learns $L_n$ as 
  it eventually outputs $\langle a(n), 1 \rangle$, a $C$-index for $L_n$, and never changes
  its mind again. As $h'$ learns these as well, it outputs a $C$-index on every initial
  sequence of elements in $\range(\tilde{a})$. Thus,
  $\varphi_p$ is total and there exists a $C$-index $e$ for its range. We now show, that $h$ learns the decidable language $L$,
  while $h'$ does not. As for any $x \in L$ there exists $n \in \N$ such that we
  have $x = \langle a(n), 2 \rangle$ and $\varphi_{x} = \varphi_{\langle
  a(n), 2 \rangle} = \varphi_e$, we have that $h$ identifies $L$ correctly once
  it sees the minimal such element in $L$. On the other hand, we show that $h'$ cannot learn $L$ from text $T$. Let $x \in \N$ and consider the following cases.
  \begin{enumerate}
    \itemin{1. Case:} $\varphi_{h'(\varphi_p[x])}(\langle a(2x), 2 \rangle) = 0$. Thus, $\langle a(2x), 2 \rangle$ is not in the hypothesis of $h'$, but it is in $L$.
    \itemin{2. Case:} $\varphi_{h'(\varphi_p[x])}(\langle a(2x), 2 \rangle) =
        1$. Here, $\langle a(2x), 2 \rangle$ is in the hypothesis of $h'$, but, as $a$
        is strictly monotonically increasing, it is not in $L$.  
  \end{enumerate} 
  Thus, none of the hypothesis $h'(T[x])$ identifies $L$ correctly.
\end{proof}

Since languages which can be learned by iterative learners can also be learned
by set-driven ones (see Theorem~\ref{ap_thm:it_subset_sd}), this result
suffices. Note that the idea above requires some knowledge on previous elements.
Thus, it is no coincidence that this separation does not include transductive
learners. Since these learners base their guesses on single elements, they
cannot see how far in the learning process they are. Thus, they are forced to
always output $C$-indices. The following theorem holds.

\begin{theorem}
  We have that \({[\Txt\Td\Cind\Ex_C]}\rec = {[\Txt\Td\Ex_C]}\rec\).
\end{theorem}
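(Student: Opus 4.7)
My plan is as follows. The inclusion $[\Txt\Td\Cind\Ex_C]\rec \subseteq [\Txt\Td\Ex_C]\rec$ is trivial, so I concentrate on the converse. Rather than transforming an $\Ex_C$-learner into one that always outputs $C$-indices, I will argue that no modification is needed: any $\Txt\Td\Ex_C$-learner $h$ already satisfies the $\Cind$-restriction on every text of every language in $\La \coloneqq \Txt\Td\Ex_C(h) \cap \REC$. The underlying reason, matching the informal explanation given before the theorem, is that a transductive learner's output on a datum $y$ depends on $y$ alone, so it cannot ``hide'' a non-$C$-index behind a finite initial phase of the learning process -- any such bad value would reappear at arbitrarily late positions on a suitable text, preventing syntactic convergence to a $C$-index.

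The key step is the structural claim that for every $L \in \La$ there is a unique $C$-index $e_L$ for $L$ with $h(y) \in \{\mbox{?}, e_L\}$ for every $y \in L \cup \{\#\}$. To prove it, I fix any $T \in \Txt(L)$; by $\Ex_C$-convergence the sequence $\Td(h,T)$ stabilizes at some $C$-index $e_L$ for $L$. For any target element $y \in L \cup \{\#\}$ I choose a text $T_y \in \Txt(L)$ in which $y$ occurs at infinitely many positions $i_1 < i_2 < \cdots$. From the propagation definition of $\Td$, whenever $h(y) \neq \mbox{?}$ we have $\Td(h, T_y)(i_k + 1) = h(y)$ for every $k$, and syntactic convergence then forces $h(y) = e_L$. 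The value $e_L$ does not depend on $T$, since at least one $y \in L$ must satisfy $h(y) \neq \mbox{?}$ (otherwise $\Td(h, T)$ is constantly $\mbox{?}$ and cannot converge to a $C$-index), pinning $e_L = h(y)$ uniformly across all texts of $L$.

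From this claim together with the propagation rule of $\Td$, every entry of $\Td(h, T)$ for every $T \in \Txt(L)$ lies in $\{\mbox{?}, e_L\}$, so each non-$\mbox{?}$ entry is the $C$-index $e_L$ and $\Cind(\Td(h, T), T)$ holds. This yields $\La \subseteq \Txt\Td\Cind\Ex_C(h)$ and completes the proof. The main obstacle is purely notational: fixing the convention that $\Cind$ ignores positions whose hypothesis is the auxiliary symbol $\mbox{?}$, which is the reading used consistently elsewhere in the paper. Trying to avoid this convention and instead effectively totalizing $h$ would run into the classical impossibility of enumerating all $C$-indices, which is precisely why the no-modification argument is the right one here.
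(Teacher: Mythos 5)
Your proposal is correct and follows essentially the same route as the paper: both argue that a $\Txt\Td\Ex_C$-learner is automatically a $\Cind$-learner on $\La$, because a datum $y$ with $h(y)\neq\mbox{?}$ not a $C$-index could be repeated infinitely often in a text of $L$, blocking syntactic convergence to a $C$-index. Your version is stated directly rather than by contradiction and is slightly more careful (handling the pause symbol and noting the uniqueness of the limit index), but these are refinements of the identical core idea.
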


\begin{proof}
    The inclusion \({[\Txt\Td\Cind\Ex_C]}\rec \subseteq  {[\Txt\Td\Ex_C]}\rec\) is
    immediate. For the other, let \(h\) be a \(\Txt\Td\Ex_C\)-learner
    and \(\La=\Txt\Td\Ex_C(h)\cap\REC\). We show that \(h\) is, in particular, a
    \(\Cind\)-learner, i.e., \(\La=\Txt\Td\Cind\Ex_C(h)\) 
    holds as well. Assume the contrary, that is, \(\La\neq\Txt\Td\Cind\Ex_C(h)\). Then
    there exists a \(L\in\mathcal{L}\) and a \(x\in L\) such that
    \(h(x)\) is no \(C\)-index. Now, given any text \(T \in \Txt(L)\), consider
    the text, for all $n \in \N$,
    \[
      T'(n) = \begin{cases} T(n), &\falls n \text{ is even}, \\ x, &\sonst. \end{cases}
    \]
    This text of the language $L$ contains infinitely many occurrences of $x$
    and, therefore, the $\Td$-learner $h$ cannot converge to a $C$-index on this
    text.
\end{proof}

For the remainder of this section, we focus on learners which output
characteristic indices on \emph{arbitrary} input, that is, we focus on
$\tau(\Cind)$-learners. First, we show that the requirement of always outputting
$C$-indices lessens a learners learning power, even when compared to total
$\Cind$-learners. To provide the separating class of self-learning languages,
one again awaits the $\tau(\Cind)$-learner's decision and then, based on these,
learns languages this learner cannot. The following result holds.

\begin{theorem}
  We have that \({[\totalCp\Txt\Td\Cind\Ex_C]}\rec \setminus
    {[\tau(\Cind)\Txt\G\Bc_C]}\rec \neq \emptyset\).\label{ap_th:tau-cind} 
\end{theorem}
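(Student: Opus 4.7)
The plan is to build a separating class via the Operator Recursion Theorem (\ORT), in the spirit of the earlier self-learning constructions in this section, but now exploiting that a $\tau(\Cind)$-learner produces on every input a $C$-index whose associated characteristic function can be algorithmically queried at any argument.

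First I would fix the total transductive learner $h$ defined by $h(\#) = \mathrm{?}$ and $h(x) = \pi_1(x)$, and set $\La = \totalCp\Txt\Td\Cind\Ex_C(h) \cap \REC$. Observe that $h \in \totalCp$ and that any recursive language whose elements are all of the form $\langle e, n \rangle$ for a fixed $C$-index $e$ of that language belongs to $\La$, because on any text of such a language $h$ outputs $e$ from the first non-pause symbol onward. Now assume, toward a contradiction, that some $\tau(\Cind)\Txt\G\Bc_C$-learner $h'$ learns $\La$; the restriction $\tau(\Cind)$ forces $h' \in \totalCp$ and ensures that $h'(\sigma)$ is a $C$-index for every sequence $\sigma$.

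By \ORT{} I would obtain $e, p \in \N$ such that, writing $T \coloneqq \varphi_p$,
\begin{align*}
    L &\coloneqq \range(T), \qquad \varphi_e = \chi_L,\\
    T(x) &= \begin{cases} \langle e, 2x \rangle, & \text{if } \varphi_{h'(T[x])}(\langle e, 2x \rangle) = 0;\\ \langle e, 2x+1 \rangle, & \text{otherwise.}\end{cases}
\end{align*}
Because $h'(T[x])$ is always a $C$-index, $\varphi_{h'(T[x])}(\langle e, 2x \rangle)$ is defined and lies in $\{0,1\}$, so $T$ is total. Its second components are strictly increasing in $x$, which makes membership in $L = \range(T)$ decidable by finitely simulating $T$; this is precisely what permits the simultaneous solution of the self-referential equation $\varphi_e = \chi_L$. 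Since every element of $L$ has first component $e$, the initial observation yields $L \in \La$.

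By construction, for every $x \in \N$ the hypothesis $h'(T[x])$ disagrees with $\chi_L$ on $\langle e, 2x \rangle$: if $\varphi_{h'(T[x])}(\langle e, 2x \rangle) = 0$ then $\langle e, 2x \rangle \in L$, while if the value is $1$ then $T(x) = \langle e, 2x+1 \rangle$ implies $\langle e, 2x \rangle \notin L$. Hence every hypothesis produced by $h'$ on $T \in \Txt(L)$ is an incorrect $C$-index for $L$, so $h'$ cannot $\Bc_C$-learn $L$, contradicting $L \in \La$. The main obstacle is the \ORT{} setup: one must verify that the mutually referential text $T$ and characteristic index $e$ really can be solved simultaneously, which hinges on the $\tau(\Cind)$ assumption to guarantee that each diagonal decision about $h'$'s guesses is itself algorithmic.
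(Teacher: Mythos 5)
Your proposal is correct and follows essentially the same route as the paper's own proof: the same transductive learner $h(\langle x,y\rangle)=x$, the same \ORT{} construction of the text $T$ diagonalizing against $h'(T[x])$ at $\langle e,2x\rangle$ versus $\langle e,2x+1\rangle$, and the same observations that totality of $T$ follows from $\tau(\Cind)$ and decidability of $L$ from monotonicity. No substantive differences to report.
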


\begin{proof}
    We prove the result by providing a separating class of languages. Let $h$ be
    the \(\Td\)-learner with \(h(\#)=\mbox{?}\) and, for all \(x,y\in \N\), let
    \(h(\langle x,y \rangle)= x\). By construction, \(h\) is total and
    computable. Let~$\mathcal{L} = \totalCp\Txt\Td\Cind\Ex_C(h) \cap \REC$. We
    show that there is no $\tau(\Cind)\Txt\G\Bc_C$-learner learning $\La$ by way
    of contradiction.
    Assume there is a \(\tau(\Cind)\Txt\G\Bc_C\)-learner \(h'\) which learns
    $\mathcal{L}$. With the Operator Recursion Theorem (\(\ORT\)), there are
    \(e,p \in \N\) such that for all
    \(x\in \N\)
    \begin{align*}
        L&\coloneqq \range(\varphi_p);\\
        \varphi_{e}&=\chi_{L};\\
        \tilde{T}(x) \coloneqq \varphi_p(x) &=
            \begin{cases}
                \langle e,2x \rangle, &\text{if } \varphi_{h'(\varphi_p[x])}(\langle
                e, 2x \rangle) = 0; \\ 
                \langle e,2x+1 \rangle, &\text{otherwise.}
            \end{cases}
    \end{align*} 
    Intuitively, for all \(x\) either \(\varphi_{p}(x)\) is an element of \(L\)
    if it is not in the
    hypothesis of \(h'\) after seeing~\(\varphi_{p}[x]\), or there is an element
    in this hypothesis that is not in \(\content(\tilde{T})\). As any hypothesis of $h'$ is
    a $C$-index, we have that $\varphi_p \in \totalCp$ and, as \( \varphi_{p}\)
    is strictly monotonically increasing, that $L$ is decidable.

    We now prove that \(L \in \mathcal{L}\) and afterwards that \(L\) cannot be
    learned by \(h'\). First, we need to prove that $h$
    learns $L$. Let $T \in \Txt(L)$. For all $n \in \N$ where $T(n)$ is not the
    pause symbol, we have $h(T(n)) = e$. Let \(n_0\in \N\) with
    \(T(n_0)\neq\#\). Then, we have, for all \(n\geq n_0\), that
    \(\Td(h,T)(n)=e\) and, since \(e\) is a hypothesis of $L$, \(h\) learns
    \(L\) from \(T\). Thus, we have that \(L\in\totalCp\Txt\Td\Cind\Ex_C(h) \cap \REC\).

    By assumption, \(h'\) learns \(\mathcal{L}\) and thus it also needs to learn
    $L$ on text $\tilde{T}$. Hence, there is $x_0$ such that for all \(x \geq x_0\) the 
    hypothesis $h'(\tilde{T}[x])=h'(\varphi_{p}[x])$  is a \(C\)-index for \(L\). We now consider
    the following cases.

    \begin{enumerate}
      \itemin{1. Case:} \(\varphi_{h'(\varphi_p[x])}(\langle e, 2x \rangle) =
          0\). By construction, we have that \(\tilde{T}(x) = \langle e, 2x
          \rangle\). Therefore, \(\langle e, 2x \rangle \in L\), which
          contradicts \(h'(\varphi_p[x])\) being a correct hypothesis.
      \itemin{2. Case:} \(\varphi_{h'(\varphi_p[x])}(\langle e, 2x \rangle) =
          1\). By construction, we have that \(\tilde{T}(x) \neq \langle e, 2x
          \rangle\) and thus, because \(\tilde{T}\) is strictly monotonically
          increasing, \(\langle e, 2x \rangle \notin L=\content(\tilde{T})\).
          This, again, contradicts \(h'(\varphi_p[x])\) being a correct
          hypothesis.
    \end{enumerate}
    As in all cases \(h'(\varphi_{p}[x])\) is a wrong hypothesis, $h'$ cannot
    learn $\mathcal{L}$.
\end{proof}

It remains to be shown that memory restrictions are severe for such learners as
well. First, we show that partially set-driven learners are more powerful than
set-driven ones. As witnessed originally by \citet{SchRicht84} and
\citet{Fulk90} (for $W$-indices), this is solely due to the lack of learning
time. We provide the following theorem. We already separate from behaviorally
correct learners, as we will need this stronger version later on.

\begin{theorem}%
    We have that \({[\tau(\Cind)\Txt\Psd\Ex_C]}\rec \setminus
        {[\Txt\Sd\Bc_W]}\rec\neq\emptyset\).\label{thm:sd-weakness}
\end{theorem}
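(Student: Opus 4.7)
I would prove the separation along the lines of Theorems~\ref{ap_th:cind} and~\ref{ap_th:tau-cind}: fix a total \(\Psd\)-learner \(h\) that outputs only \(C\)-indices, take the candidate separating class to be \(\mathcal{L} := \tau(\Cind)\totalCp\Txt\Psd\Ex_C(h) \cap \REC\), and then diagonalize against any putative \(\Sd\Bc_W\)-learner \(h'\) for \(\mathcal{L}\) via the Operator Recursion Theorem.

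I would design \(h\) so that, on input \((D,t)\), it spends \(t\) steps of the Blum measure \(\Phi\) attempting to extract and verify a \(C\)-index candidate from the content---specifically, from \(\pi_1(\min(D))\)---and returns that candidate (after a standard padding ensuring the output is itself a \(C\)-index), defaulting to \(\ind(\emptyset)\) if no certificate is found in time. This makes \(h\) total, and every output is a \(C\)-index, so \(h\) respects \(\tau(\Cind)\). Since \(\min(\content(T[n]))\) stabilizes on any text \(T\) of a nonempty language and the Blum-bounded search terminates whenever the underlying computation converges, \(h\) syntactically \(\Ex_C\)-converges on every text of any recursive language whose minimum codes a correct \(C\)-index for it. Thus \(\mathcal{L}\) is rich enough to harbor the self-referential languages produced in the next step.

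For the lower bound, assume \(h'\) is an \(\Sd\Bc_W\)-learner for \(\mathcal{L}\). Using \ORT{}, obtain indices \(e, p\) and a strictly monotonically increasing \(\varphi_p \in \totalCp\) enumerating \(L := C_e\) with \(\min(L)\) coding a correct \(C\)-index for \(L\) (so that \(L \in \mathcal{L}\)), while \(\varphi_p(i)\) for \(i \geq 1\) is chosen reactively: at stage \(i\), wait via \(\Phi\) until \(\varphi_{h'(\content(\varphi_p[i]))}\) halts on a selected candidate element, and then enumerate \(\varphi_p(i)\) so as to falsify this hypothesis, exactly as in the refutation step of Theorem~\ref{ap_th:c-w}. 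Since \(h'\) is set-driven, its output depends only on \(\content(\varphi_p[i])\), which is computable inside the ORT construction. Consequently, infinitely many of \(h'\)'s conjectures on the text \(\varphi_p\) fail to be \(W\)-indices for \(L\), contradicting \(\Bc_W\)-convergence.

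The main obstacle is the simultaneous self-reference needed to guarantee \(L \in \mathcal{L}\): the ORT fixed point must encode, in a single mutually-dependent package, the index \(e\) of \(L\), the enumeration \(\varphi_p\) whose reactive construction produces \(L\), and the minimum-element certificate by which \(h\) locks onto \(e\). ORT is tailored for exactly such constructions. The essential use of \(\Psd\) over \(\Sd\) shows up in the ability of \(h\) to carry out a time-budgeted verification that \(\Sd\)-learners cannot simulate; conversely, the fact that \(h'\) sees only content (and hence is governed by locking sets) is what enables the set-driven refutation to succeed.
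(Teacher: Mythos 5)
There is a genuine gap, and it sits exactly where the paper is most careful: your positive side. You want a \emph{self-learning} class for a $\tau(\Cind)$-learner, i.e.\ a learner that, on input $(D,t)$, extracts a candidate index from $\pi_1(\min(D))$, ``verifies'' it within $t$ steps, and outputs it ``after a standard padding ensuring the output is itself a $C$-index.'' This step cannot be carried out. Being a $C$-index is a $\Pi^0_2$ property (totality plus range in $\{0,1\}$); it has no finite certificate, so no time-budgeted search can verify it, and no padding or other effective transformation turns an arbitrary index into a $C$-index for the same set (you cannot force totality). A $\tau(\Cind)$-learner must output valid $C$-indices on \emph{every} input, including sets $D$ whose minimum codes garbage, so the self-learning trick --- which works for $\Cind$-learners, where the requirement only bites on texts of learned languages that ORT lets you control --- is unavailable here. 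This is precisely why the paper abandons self-learning for this theorem and instead exhibits an explicit class: $L_e = \{\langle e,x\rangle \mid x\in\N\}$ and $L'_e = \{\langle e,x\rangle \mid x\le \varphi_e(0)\}$, with the learner's outputs produced by S-m-n functions $p,p'$ that are $C$-indices by construction; the $\Psd$ time parameter is used only to decide \emph{which} of the two guaranteed-valid indices to output (switch from $L_e$ to $L'_e$ once $\Phi_e(0)\le t$).

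Your diagonalization also needs repair. Against a $\Bc_W$-learner the question ``is element $y$ in the hypothesis $W_{h'(D)}$?'' is only semi-decidable, so ``wait until $\varphi_{h'(\cdot)}$ halts on a selected element, then falsify'' silently assumes the wait terminates; you do not handle the divergent branch, and in the convergent branch you must ensure each hypothesis is wrong about the \emph{final} language, not just the current content --- excluding an element the hypothesis enumerates requires a global commitment never to add it. The paper resolves this with a single dichotomy via Kleene's Recursion Theorem: $\varphi_e(0)$ converges to $m$ iff $\langle e,m+1\rangle$ is ever enumerated into $h'$'s hypothesis on $\{\langle e,x\rangle\mid x\le m\}$. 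If it converges, the target is the finite $L'_e$ and $h'$'s set-driven hypothesis on the full content wrongly contains $\langle e,m+1\rangle$; if it diverges, the target is the infinite $L_e$ and every hypothesis $h'(\content(T[m+1]))$ misses $\langle e,m+1\rangle\in L_e$, defeating $\Bc_W$. I recommend rebuilding your argument around an explicit two-family class of this shape rather than a self-learning class.
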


\begin{proof}
    We prove the theorem by providing a separating class \(\La\). For all
    \(e\in\N\), we define
    \begin{align*}
        L_e&=\set{\langle e,x \rangle }{x\in\N};\\
        L'_e&=\set{\langle e,x \rangle }{\varphi_{e}(0)\convs \wedge x\leq \varphi_{e}(0)
        };\\
        \La&=\bigcup_{e \in \N}( \set{L_e}{\varphi_{e}(0)\divs }\cup
        \set{L'_e}{\varphi_{e}(0)\convs}).
    \end{align*}
    Note that \(\La\subseteq \REC\).
    First, we provide a learner \(h \) such that
    \(\La \subseteq \tau(\Cind)\Txt\Psd\Ex_C(h)\cap\REC\). To define \(h\), we
    need the following auxiliary functions. Due to the S-m-n Theorem there
    exist \(f,p,p'\in\totalCp\) such that for all finite sets \(D\) and  all
    \(e,x\in\N\)
    \begin{align*}
        f(D) &= 
            \begin{cases}
                \pi_1(\min(D)), &\cIf D\neq\emptyset;\\
                0, &\otw;
            \end{cases}\\
        \varphi_{p(e)}&=\chi_{L_e};\\
        \varphi_{p'(e,x)}&=\chi_{\set{\langle e,y \rangle }{y\leq x }}.
    \end{align*}
    Intuitively, we use $f$ to recover the first component of the minimal given
    element.  With \(p\) and \(p'\) we can generate \(C\)-Indices for \(L_e\)
    and \(L'_e\), respectively. Now, we define the learner $h$ as, for all
    finite sets $D$ and all $t \in \N$,
    \begin{align*}
        h(D,t)=
            \begin{cases}
                \ind(\emptyset), &\cIf D=\emptyset;\\
                p(f(D)), &\sonstfalls  \Phi_{f(D)}(0)> t;\\
                p'(f(D), \varphi_{f(D)}(0)), &\otw.
            \end{cases}
    \end{align*}
    Intuitively, given elements of the form $\langle e, x \rangle$, \(h\) suggests
    \(L_e\) until it witnesses \(\varphi_{e}(0)\convs\), whereupon it
    suggests \(L'_e\). Note that \(h\) is a \(\tau(\Cind)\)-learner by
    construction. 

    To show that \(\La \subseteq \tau(\Cind)\Txt\Psd\Ex_C(h)\), let
    \(e\in\N\). If \(\varphi_{e}(0)\divs\), \(h\) needs to learn \(L_e\). After
    seeing the first non-pause symbol, \(h\) constantly outputs \(p(e)\), which
    is a correct index for \(L_e\). If, otherwise, \(\varphi_{e}(0)\convs\),
    \(h\) needs to learn \(L'_e\). Let \(T\in\Txt(L'_e)\) and  \(n_0\in\N\)
    big enough such that \(T[n_0]\neq\emptyset\) and \(n_0\geq\Phi_e(0)\). Then
    for all \(n\geq n_0\) we have \(h(T[n],n)=p'(e,\varphi_{e}(0))\) and thus
    \(h\) learns \(L'_e\) as well.

    It remains to be shown that there is no learner \(h'\) such that \(\La \subseteq
    \Txt\Sd\Bc_W(h')\). Assume the opposite, i.e., let \(h'\) be a learner with
    \(\La \subseteq \Txt\Sd\Bc_W(h')\). By Kleenes Recursion Theorem there exists an
    index \(e\in\N\) such that, for all $x \in \N$,
    \[
        \varphi_e(x) = \begin{cases}
            m, &\falls \exists m \colon \langle e,m+1 \rangle\in
                C_{h'(\set{\langle e,x  \rangle}{x\leq m })}; \\
            \divs, &\sonst.
        \end{cases}
    \]
    If ever, we take the first such $m$ found. We differentiate whether
    \(\varphi_{e}(0)\convs\) or not.
    \begin{itemize}
        \itemin{Case 1:}\(\varphi_{e}(0)\convs\). Then \(h'\) has to learn
            \(L'_e\). Let  \(m=\varphi_{e}(0)\). By definition of \(e\) we have
            \(\langle e,m+1 \rangle \in C_{h'(L'_e)}\). As \(\langle e,m+1
            \rangle\notin L'_e \), this contradicts \(h'\) learning \(L_e'\).
        \itemin{Case 2:}\(\varphi_{e}(0)\divs\). Then \(h'\) has to learn
            \(L_e\). Let \(T \in \Txt(L_e)\) be the text with, for all \(i\in\N\),
            \(T(i)=\langle e,i \rangle \). By definition of \(e\) we have, for all \(m\in\N\),
            \[\langle e,m+1 \rangle \notin C_{h'(\set{\langle e,x \rangle }{x\leq m })} = C_{h'(\content(T[m+1]))}.\]
            Therefore, \(h'\) cannot converge to a correct hypothesis for
            \(L_e\) on \(T\) and, thus, not learn it.\qedhere
    \end{itemize}\noqed
\end{proof}

In turn, this lack of time is not as severe as lack of memory. The standard
class (of recursive languages) to separate set-driven learners from iterative
ones \citep{JORS99} can be transferred to the setting studied in this paper. We
obtain the following result. 

\begin{theorem}
    We have that
    \({[\tau(\Cind)\Txt\Sd\Ex_C]}\rec \setminus
    {[\Txt\It\Ex_W]}\rec \neq \emptyset\). \label{thm:it-sd-weak}
\end{theorem}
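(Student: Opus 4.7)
The plan is to exhibit a class $\La$ of recursive languages witnessing the separation, adapting the standard construction from \citet{JORS99} to the present setting. I expect the key idea to be a self-referential $\ORT$-construction, analogous in spirit to \reft{ap_th:tau-cind} and \reft{thm:sd-weakness}: the $\Sd$-learner exploits that it sees the entire observed content in one batch, whereas the $\It$-learner is forced to commit to a state after each datum and can thus be fooled by ordering.

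First, I would design the candidate class so that each of its languages carries an easily extractable ``tag'' identifying its index. Concretely, I would consider for each $e \in \N$ an infinite variant $L_e = \set{\langle e, x\rangle}{x \in \N}$ together with finite variants $L_{e,n} = \set{\langle e, x\rangle}{x \leq n}$, and populate $\La$ with a suitable subcollection of these so that exactly one variant per $e$ is included (all of which are recursive by construction). The tag is the first component of any element. By the S-m-n Theorem there are $p, p' \in \totalCp$ with $\varphi_{p(e)} = \chi_{L_e}$ and $\varphi_{p'(e,n)} = \chi_{L_{e,n}}$.

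Next, I would construct a total $\tau(\Cind)\Txt\Sd\Ex_C$-learner $h$ for $\La$. On input $D$, the learner sets $e = \pi_1(\min(D))$ when $D \neq \emptyset$, and outputs either $p(e)$ or $p'(e, \max\set{x}{\langle e,x\rangle \in D})$ depending on an effectively checkable condition on $D$ that distinguishes the finite from the infinite variant (on $D = \emptyset$, it outputs $\ind(\emptyset)$). Totality and the $\Cind$-restriction hold by construction, and convergence to a correct $C$-index in the limit follows because the decisive local information appears in $\content(T)$ for every text $T$ of a language in $\La$.

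The main obstacle, as usual, is the non-learnability part. I would assume toward contradiction that some learner $h'$ $\Txt\It\Ex_W$-learns $\La$ and use the Operator Recursion Theorem to produce a self-referential language $L \in \La$ together with a text $T$ for $L$ such that $h'$ is fooled on $T$. The central leverage is the iterative identity $\It(h',T)(n+1) = h'^{*}(T[n+1]) = h'(h'^{*}(T[n]), T(n))$, which lets us, while defining $T$, monitor the state evolution of $h'$ and steer the construction. A case distinction on whether $h'$ stabilizes on $T$, in the style of \reft{ap_th:cind}, then yields that either $h'$ performs infinitely many mind changes on $T$, or else its stabilized hypothesis is for a wrong variant of $L_e$; in either case, $h'$ fails to $\Ex_W$-learn some language in $\La$. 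Getting the definition of $L$ to simultaneously lie in $\La$ and diagonalize against $h'$ is the delicate step, and mirrors the $\ORT$-arguments used earlier in this section.
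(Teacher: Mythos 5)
There is a genuine gap, on two fronts. First, your diagonalization against the $\Txt\It\Ex_W$-learner is modelled on the $\ORT$ arguments of Theorems~\ref{ap_th:cind} and~\ref{ap_th:tau-cind}, but those work because the opposing learner's hypotheses are $C$-indices: there one can compute $\varphi_{h'(\cdot)}(y)\in\{0,1\}$ and then place the contradicting element into $L$. Against $W$-indices this query is unavailable (non-membership in $W_{h'(\cdot)}$ is not semi-decidable), so your step ``its stabilized hypothesis is for a wrong variant'' cannot be certified inside the construction. The standard way to defeat an iterative $\Ex_W$-learner is instead its forgetfulness: one exhibits \emph{two distinct} languages of $\La$ on which $h'$ reaches the same internal state and then receives identical continuations. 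Your class, with ``exactly one variant per $e$'' and the tag $e$ visible in every datum, supplies no such pair --- languages with different tags share no data, and a single tag contributes only one language --- so the confusion argument has nothing to bite on (indeed, if the choice of variant is limit-computable from $e$, such a class may well be iteratively $\Ex_W$-learnable via padding). Second, the ``effectively checkable condition on $D$'' by which your set-driven learner is to distinguish the finite from the infinite variant is left unspecified, and this is exactly the kind of decision (e.g., convergence of $\varphi_e(0)$) that a set-driven learner, lacking the time parameter, cannot make; Theorem~\ref{thm:sd-weakness} separates $\Psd$ from $\Sd$ on precisely such a class, so this point cannot be waved away.

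For comparison, the paper's proof needs no $\ORT$ at all. It takes $\La=\{D\cup\{0\}\mid D\subseteq_\Fin\N\}\cup\{\N^+\}$; the set-driven learner outputs $\ind(D)$ if $0\in D$ and a fixed $C$-index for $\N^+$ otherwise, which is total, always a $C$-index, and correct in the limit. For the lower bound one takes a locking sequence $\sigma$ of the purported iterative learner on $\N^+$, sets $x=\max(\content(\sigma))$, and observes that $\sigma\concat(x+1)\concat 0^\infty$ and $\sigma\concat(x+2)\concat 0^\infty$ are texts for distinct languages of $\La$ on which the iterative learner, having the same state after $\sigma\concat(x+1)$ and $\sigma\concat(x+2)$, converges to the same hypothesis. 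If you want to salvage your approach, you should redesign the class so that many languages share a common infinite ``spine'' plus distinguishing elements that the iterative learner can be made to forget; at that point you will essentially have rederived the paper's construction.
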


\begin{proof}
    This is a standard proof and we include it for completeness \citep{JORS99}.
    We show this theorem by stating a class of languages that can be learned
    by a \(\tau(\Cind)\Txt\Sd\Ex_C\)-learner, but any
    \(\Txt\It\Ex_W\)-learner fails to do so. To that end, let \(\mathcal{L} =
    \{D \cup \{0 \} \mid D \subseteq_\Fin \mathbb{N}\} \cup
    \{\mathbb{N}^+\}\).
    We define the \(\Sd\)-learner \(h\) for all finite sets \(D\), with
    \(p\) being a \(C\)-Index for \(\N^+\), as
    \begin{equation*}
        h(D) = \begin{cases}
            \ind(D), &\text{if } 0 \in D;\\
            p, &\text{otherwise.}
        \end{cases}
    \end{equation*}
    It is easy to verify that 
    \(\La \subseteq \tau(\Cind)\Txt\Sd\Ex_C(h)\). Now, assume
    there is a \(\Txt\It\Ex_W\)-learner \(h'\) that learns
    \(\mathcal{L}\) and let \(\sigma\) be a locking sequence of \(h'\) on
    \(\mathbb{N}^+\) with \(x = \max(\content(\sigma))\).
    The texts \(\sigma \concat (x + 1) \concat 0^\infty\) and \(\sigma
    \concat (x + 2) \concat 0^\infty\) are texts for distinct languages from
    \(\mathcal{L}\) but
    \(h'\) suggests exactly the same hypotheses on both texts and
    can therefore not be \(\Ex_W\)-successful on both languages.
\end{proof}

Lastly, we show that transductive learners, having basically no memory, do
severely lack learning power. As they have to infer their conjectures from
single elements they, in fact, cannot even learn basic classes such as $\{ \{0
\}, \{1 \}, \{0,1\}\}$. The following result holds. It concludes the map shown
in Figure~\ref{Map:DifferentSettings} and, therefore, also this section.

\begin{theorem}
    For \(\beta\in \set{\It,\Sd}\), we have that \label{thm:it-td-weak}
    \[{[\tau(\Cind)\Txt\beta\Ex_C]}\rec
    \setminus{[\Txt\Td\Ex_W]}\rec\neq\emptyset.\]
\end{theorem}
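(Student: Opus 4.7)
The plan is to use the three-element family $\mathcal{L} = \{\{0\}, \{1\}, \{0,1\}\} \subseteq \REC$ suggested just before the statement. Since Theorem~\ref{ap_thm:it_subset_sd} supplies $[\tau(\Cind)\Txt\It\Ex_C]\rec \subseteq [\tau(\Cind)\Txt\Sd\Ex_C]\rec$, it suffices to witness $\mathcal{L} \in [\tau(\Cind)\Txt\It\Ex_C]\rec$ and to show $\mathcal{L} \notin [\Txt\Td\Ex_W]\rec$, and both $\beta$ cases follow.

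For the positive direction I would fix four pairwise distinct $C$-indices $p_\emptyset, p_{\{0\}}, p_{\{1\}}, p_{\{0,1\}}$ for the four subsets of $\{0,1\}$, using $\ind$ and, if necessary, $\pad$ to separate them. The iterative learner $h$ is set to $h(\varepsilon) = p_\emptyset$ and, on a previous hypothesis $e$ paired with a new datum $x$, it tests whether $e$ coincides with one of these four designated indices; if $e = p_S$ it outputs $p_{S \cup (\{x\} \cap \{0,1\})}$, and in every other case (in particular when $x \notin \{0,1,\#\}$ or when $e$ is not among the four designated hypotheses) it safely defaults to $p_\emptyset$. The range of $h$ is then contained in the four $C$-indices, so $\tau(\Cind)$ holds on every text; on any text of $L \in \mathcal{L}$ the tracked subset grows monotonically to $L$, whence $h$ converges to the correct $p_L$.

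For the negative direction, assume for contradiction that some $h$ is a $\Txt\Td\Ex_W$-learner for $\mathcal{L}$ and inspect the sequences $\Td(h,\cdot)$ on the three texts $0^\infty$, $1^\infty$, and $(01)^\infty$. On $0^\infty$ the sequence equals $\mbox{?}, h(0), h(0), \dots$, which forces $h(0) \neq \mbox{?}$ and $W_{h(0)} = \{0\}$; symmetrically, $h(1) \neq \mbox{?}$ and $W_{h(1)} = \{1\}$. On $(01)^\infty$ the sequence alternates between $h(0)$ and $h(1)$, so $\Ex_W$-convergence demands $h(0) = h(1)$, contradicting $W_{h(0)} = \{0\} \neq \{1\} = W_{h(1)}$.

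The only subtlety I foresee is ensuring that the iterative learner's four designated hypotheses are pairwise distinct (so that the state can be decoded from the previous hypothesis) while still being correct $C$-indices for their respective sets; this is exactly what $\pad$ is designed for. The conceptual crux, as highlighted in the paragraph preceding the statement, is that a transductive learner cannot simultaneously distinguish $\{0\}$ from $\{1\}$ and recognise their union on an alternating text, and the minimal family $\mathcal{L}$ isolates precisely this weakness.
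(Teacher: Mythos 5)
Your proposal is correct and follows essentially the same route as the paper: the same separating class $\{\{0\},\{1\},\{0,1\}\}$, the positive direction the paper dismisses as immediate (your explicit iterative learner and the appeal to Theorem~\ref{ap_thm:it_subset_sd} are fine, and the four designated indices are automatically distinct since they decide distinct sets), and the same transductive impossibility argument, merely using the alternating text $(01)^\infty$ where the paper uses $0\concat 1^\infty$. No gaps.
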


\begin{proof}
        We include this standard proof for completeness. We follow
        \citet{CCJS07} and show that $\La = \{ \{0\}, \{1\}, \{0,1\}\}$ is a
        separating class. Immediate, we have that $\La$ can be learned by a
        $\tau(\Cind)\Txt\beta\Ex_C$-learner and, thus, $\La \in
        [\tau(\Cind)\Txt\beta\Ex_C]\rec$. Now, assume there exists a learner
        $h'$ $\Txt\Td\Ex_W$-learning $\La$. Consider the texts $T_0 = 0^\infty
        \in \Txt(\{0\})$ and $T_1 = 1^\infty \in \Txt(\{1\})$. As $h'$ must
        identify both languages on their respective text, we have that, for $x
        \in \{0,1\}$, $h'(x)$ must be a $C$-index for $\{ x \}$. However, then
        $h'$ cannot output a $C$-index of $\{0,1\}$ on the text $T = 0 \concat
        1^\infty$, a contradiction. 
\end{proof}

\section{Syntactic versus Semantic Convergence to $C$-indices}\label{Sec:C-Convergence}

In this section we investigate the effects on learners when we require them to
converge to characteristic indices. We study both syntactically converging
learners as well as semantically converging ones. In particular, we compare
learners imposed with different well-studied memory restrictions. 

Surprisingly, we observe that, although $C$-indices incorporate and, thus,
require the learner to obtain more information during the learning process than
$W$-indices, the relative relations of the considered restrictions remain the
same. We start by gathering results which directly follow from the previous
section. In particular, the following corollary holds.

\begin{corollary}
    We have that 
    \begin{align*}
        {[\Txt\Psd\Ex_C]}\rec &= {[\Txt\G\Ex_C]}\rec, (\text{Theorem~\ref{ap_th:g-psd}}), \\
        {[\Txt\It\Ex_C]}\rec &\subseteq {[\Txt\Sd\Ex_C]}\rec, (\text{Theorem~\ref{ap_thm:it_subset_sd}}), \\
        {[\Txt\G\Ex_C]}\rec &\setminus {[\Txt\Sd\Bc_C]}\rec \neq \emptyset, (\text{Theorem~\ref{thm:sd-weakness}}), \\
        {[\Txt\Sd\Ex_C]}\rec &\setminus {[\Txt\It\Ex_C]}\rec \neq \emptyset, (\text{Theorem~\ref{thm:it-sd-weak}}), \\
        {[\Txt\It\Ex_C]}\rec &\setminus {[\Txt\Td\Ex_C]}\rec \neq \emptyset, (\text{Theorem~\ref{thm:it-td-weak}}).
    \end{align*}
\end{corollary}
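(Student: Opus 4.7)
The plan is to derive each of the five bullets by specializing the indicated theorem and then applying monotonicity of the hierarchy together with straightforward hypothesis translations via the s-m-n theorem and padding.

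For bullet (1), I would instantiate \reft{ap_th:g-psd} at $\delta = \delta' = \T$ to obtain $[\Txt\G\Ex_C]\rec = [\totalCp\Txt\Psd\Ex_C]\rec$ and then sandwich with the trivial inclusions $[\totalCp\Txt\Psd\Ex_C]\rec \subseteq [\Txt\Psd\Ex_C]\rec \subseteq [\Txt\G\Ex_C]\rec$ to close the chain. Bullet (2) is immediate from \reft{ap_thm:it_subset_sd} with $\delta = \delta' = \T$ and $\mathcal{C} = \partialCp$.

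For bullets (3)--(5), the set-theoretic skeleton is the same: whenever $A \subseteq A'$ and $B' \subseteq B$ with $A \setminus B \neq \emptyset$, we also have $A' \setminus B' \neq \emptyset$. The left-hand inclusions, such as $[\tau(\Cind)\Txt\Psd\Ex_C]\rec \subseteq [\Txt\G\Ex_C]\rec$, $[\tau(\Cind)\Txt\Sd\Ex_C]\rec \subseteq [\Txt\Sd\Ex_C]\rec$, and $[\tau(\Cind)\Txt\It\Ex_C]\rec \subseteq [\Txt\It\Ex_C]\rec$, are obtained by dropping the $\tau(\Cind)$ restriction and, for (3), invoking the $\Psd$–$\G$ collapse of bullet (1). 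The right-hand inclusions $[\Txt\Sd\Bc_C]\rec \subseteq [\Txt\Sd\Bc_W]\rec$, $[\Txt\It\Ex_C]\rec \subseteq [\Txt\It\Ex_W]\rec$ and $[\Txt\Td\Ex_C]\rec \subseteq [\Txt\Td\Ex_W]\rec$ need converting $C$-indices to equivalent $W$-indices.

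Such a conversion is produced by a total computable $s$ with $W_{s(c)} = C_c$ whenever $c$ is a $C$-index: by s-m-n, $\varphi_{s(c)}(x)$ simulates $\varphi_c(x)$ and diverges unless the value is $1$. For $\Sd$ and $\Td$ learners we simply post-compose their outputs with $s$; preservation of the $\Ex$- or $\Bc$-behaviour on recursive languages is then immediate. The one slightly delicate case is $\It$, where the transformed learner sees only the transformed previous hypothesis; here we pad so that the underlying $C$-index remains recoverable, replacing each hypothesis $s(c)$ by $\pad(s(c), c)$ and defining $h'(\pad(s(c), c), x) \coloneqq \pad(s(h(c, x)), h(c, x))$, with $h'(\varepsilon) \coloneqq \pad(s(h(\varepsilon)), h(\varepsilon))$. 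The main (and essentially only) technical point is this last translation for iterative learners; everything else is routine bookkeeping.
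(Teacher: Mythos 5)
Your proposal is correct and matches the paper's intent: the corollary is stated there without proof, as a direct specialization of the cited theorems combined with the trivial inclusions (dropping $\tau(\Cind)$, $\Psd\subseteq\G$) and the routine $C$-to-$W$ hypothesis translation you describe. Your padding construction for the iterative case is exactly the standard device the paper itself uses elsewhere (e.g.\ in the proof that ${[\Txt\It\Bc_C]}\rec = {[\Txt\G\Bc_C]}\rec$), so nothing is missing.
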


We show the remaining results. First, we show that, just as for $W$-indices,
behaviorally correct learners are more powerful than explanatory ones. We
provide a separating class exploiting that explanatory learners must converge to
a single, correct hypothesis. We collect elements on which mind changes are
witnessed, while maintaining decidability of the obtained language. The
following result holds.

\begin{theorem}
  We have that ${[\Txt\Sd\Bc_C]}\rec \setminus {[\Txt\G\Ex_C]}\rec \neq \emptyset$.
\end{theorem}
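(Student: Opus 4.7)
The plan is to separate the two sides by exhibiting a class that is $\Sd\Bc_C$-learnable but beyond the reach of any $\G\Ex_C$-learner. For the positive side I would take the $\Sd$-learner $h(D) = \max(D)$ when $D \neq \emptyset$ (with $h(\emptyset) = \ind(\emptyset)$) and define $\mathcal{L} = \Txt\Sd\Bc_C(h) \cap \REC$, so $\mathcal{L} \in {[\Txt\Sd\Bc_C]}\rec$ by construction; this class contains in particular every infinite recursive $L$ all of whose elements are $C$-indices for $L$ itself, since along any text of such an $L$ the running maximum is eventually always such a $C$-index.

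For the negative side, assume towards contradiction that some $\G\Ex_C$-learner $h'$ learns $\mathcal{L}$; by~\citet{KP16} we may take $h'$ to be total. I would apply \ORT{} to obtain simultaneously indices $e, p \in \N$ and a computable strictly increasing $a \in \totalCp$ with $\varphi_{a(k)} = \varphi_e$ for all $k$, so that each $a(k)$ is a $C$-index for $L \coloneqq \range(\varphi_p)$. The enumeration $\varphi_p$ is built stage-by-stage: at stage $n$, given $\varphi_p[n]$, dovetail (a) a search for the least fresh $k$ such that $h'(\varphi_p[n] \cdot a(k)) \neq h'(\varphi_p[n])$ and (b) a dovetailing of the computations $\varphi_{h'(\varphi_p[n])}(a(k))$ over unused $k$. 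Whichever terminates first determines $\varphi_p(n)$: if (a) wins, set $\varphi_p(n) = a(k)$ and force a mind change; if (b) wins on some $a(k)$ with value $0$, put $\varphi_p(n) = a(k)$, and with value $1$ put $\varphi_p(n) = a(k')$ for some larger $k'$ and permanently skip the index $k$, in either case arranging $\chi_L(a(k)) \neq \varphi_{h'(\varphi_p[n])}(a(k))$. A default fallback, for the case neither branch resolves within stage $n$'s budget, keeps $\varphi_p$ total. Since every element of $L$ is some $a(k)$, it is a $C$-index for $L$, so $L \in \mathcal{L}$.

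The analysis splits into three cases. If (a) succeeds infinitely often, $h'$ makes infinitely many mind changes on the computable text $\varphi_p$ of $L$ and so does not $\Ex_C$-converge. Otherwise, from some stage $n^*$ on, $h'$'s hypothesis $e^* \coloneqq h'(\varphi_p[n^*])$ is stable under any further $a$-extension; if $e^*$ is a $C$-index, $\varphi_{e^*}$ is total, so the dovetailing (b) eventually succeeds on some $a(k)$ and delivers a point at which $\varphi_{e^*}$ and $\chi_L$ disagree, contradicting $\varphi_{e^*} = \chi_L$; if $e^*$ is not a $C$-index, then $h'$ already fails $\Ex_C$ by definition. In all three cases $h'$ fails to $\Ex_C$-learn $L$, contradicting the assumption and yielding the desired separation.

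The principal obstacle is closing the ORT fixed point while keeping $\varphi_p$ total and $L$ decidable under this two-phase construction. I would handle it by using stage-bounded dovetailing and ensuring each stage consumes a strictly larger $k$; checking $a(k) \in L$ then reduces to a finite search over $\varphi_p(0), \ldots, \varphi_p(k)$, which makes $\chi_L$ computable and renders the $\varphi_e = \chi_L$ clause of the recursion consistent. The delicate point is that the fallback must never fire in a way that rescues $h'$: it should only be reached when $e^*$ is not a $C$-index, in which case $h'$ already loses.
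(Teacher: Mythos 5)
Your proposal is correct in outline and reaches the separation by a genuinely different route from the paper. Both arguments share the positive side verbatim ($h(D)=\max(D)$ and the self-learning class $\La=\Txt\Sd\Bc_C(h)\cap\REC$), but the negative sides diverge. You diagonalize against the \emph{content} of $h'$'s conjectures, in the style of the paper's Theorem~\ref{ap_th:c-w}: at each stage, either a fresh one-element extension forces a mind change, or the current hypothesis halts on some fresh test point $a(k)$ with a value in $\{0,1\}$, which you then contradict; since a final $\Ex_C$-hypothesis must be a total $\{0,1\}$-valued function, the second search must eventually succeed after convergence and refute it. The paper instead runs a purely syntactic two-extension argument: at stage $j$ it waits for $h'$ to change its mind on ${\sigma_j}\concat a{(0,|\sigma_j|)}^{t}$ or ${\sigma_j}\concat a{(1,|\sigma_j|)}^{t}$; if this happens forever, $h'$ diverges on an infinite text, and if it stalls, the two extensions are texts of two \emph{distinct finite} languages in $\La$ on which $h'$ converges identically. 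The paper thus never evaluates $\varphi_{h'(\cdot)}$ and needs no separate case for convergence to a non-$C$-index, at the price of the heavier ORT object ($f$, $i_0$, $s$ and Claim~\ref{Claim:fdecide}) needed to keep every element a $C$-index for the limit language; your route buys a much simpler membership coding (every $a(k)$ is hard-wired to $\varphi_e=\chi_L$). Two points you should make explicit to close your argument: first, the fallback and the dovetailed search (b) compete for fresh indices, so you need a reserved infinite supply of diagonalization candidates (or a comparable bookkeeping device) to guarantee that once $h'$ has converged to a genuine $C$-index $e^*$, branch (b) fires at some later stage; second, your trichotomy should be keyed to whether $h'$ converges on the text $\varphi_p$ at all --- non-convergence for any reason, not only branch (a) succeeding infinitely often, already defeats $h'$, since elements appended by (b) or the fallback may themselves trigger mind changes. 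With that bookkeeping spelled out the argument goes through.
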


\begin{proof}
  In order to provide a separating class of languages, we consider the learner,
  for all finite $D \subseteq \N$, $h(D) = \max(D)$. Let $\La =
  \Txt\Sd\Bc_C(h)\cap\REC$. We show that there exists no learner $h'$ that
  $\Txt\G\Ex_C$-learns $\La$. To that end, assume there exists such a learner
  $h'$, that is, $\La \subseteq \Txt\G\Ex_C(h')$. Without loss of generality, we
  may assume $h'$ to be total, as is shown in \citet{KP16}.
  
  Using the Operator Recursion Theorem (\ORT), there exist an \emph{interleaved
  increasing}
  \footnote{A function $a$ is called \emph{interleaved increasing}
    if, for all $n$, we have $a(0,n) < a(1,n) < a(0, n+1)$.} function $a \in
  \totalCp$, a sequence of sequences ${(\sigma_j)}_{j \in \N}$ and functions $f,
  i_0, s \in \partialCp$ such that, for all $i, j, k, t, x \in \NAT$ and $b \in
  \{0,1\}$, we have
  \begin{align*}
  P_j(t) &\Leftrightarrow h'({\sigma_j} \concat a{(0,|\sigma_j|)}^t) \neq
    h'(\sigma_j) \lor h'({\sigma_j} \concat a{(1,|\sigma_j|)}^t) \neq
    h'(\sigma_j); \\
  s(j) &= \mu t.P_j(t);\\
  \sigma_0 &= \varepsilon;  \\
  \sigma_{j+1} &= 
    \begin{cases}
      \divs,    &\falls s(j) \divs; \\
      {\sigma_j}\concat a{(0,|\sigma_j|)}^{s(j)}, &\sonstfalls h'({\sigma_j}\concat a{(0,|\sigma_j|)}^{s(j)}) \neq h'(\sigma_j);\\
      {\sigma_j}\concat a{(1,|\sigma_j|)}^{s(j)}, &\sonst.
    \end{cases}\\
  \varphi_{a(b,i)}(x) &= 
  \begin{cases}
    1, &\falls x = a(b,i); \\
    0, &\sonstfalls x = a(1-b,i); \\
    f(b',k'), &\sonstfalls \exists k' \in \N \ \exists b' \in \settwo{0,1} \colon x = a(b',k');\\
    0, &\sonst.
  \end{cases} \\
  i_0(k) &= \max\{ j \mid \sigma_j\convs \wedge |\sigma_j| \leq k \}; \\
  f(b,k) &= 
  \begin{cases}
    0, &\falls k > |\sigma_{i_0(k)}| ;\\
    1, &\sonstfalls s(i_0(k)) \convs \AND a(b,k) \in \content(\sigma_{i_0(b,k)+1});\\
    0, &\sonstfalls s(i_0(k)) \convs; \\
    \divs, &\sonst.
  \end{cases}
  \end{align*}
  Note that $b'$ and $k'$ in the third case of $\varphi_{a(b,i)}(x)$ are, if
  they exist, unique as $a$ is interleaved increasing. The intuition is the
  following. For $j \in \N$, we extend the sequence $\sigma_j$ as soon as $h'$
  makes a particular mind change, if ever. This guarantees that $h'$ cannot
  learn certain languages $h$ can. Furthermore, for suitable $b,i \in \N$, every
  element $a(b,i)$ of the sequence encodes the language $\bigcup_{j \in \N,
  \sigma_j \convs} \content(\sigma_j)$ (as $C$-index). This encoding is done
  using function $f$ which, given the right circumstances, can decide whether an
  element belongs to the mentioned language or not. We first provide a proof for
  this claim. 
  \begin{claim}\label{Claim:fdecide}
    Let $b \in \settwo{0,1}$ and $j_0$ such that $\sigma_{j_0}$ is defined. Let
    $k = |\sigma_{j_0}|$. Then, if $\sigma_{j_0 + 1}$ is defined, $f(b,k)$
    correctly decides whether $a(b,k) \in \bigcup_{j \in \N, \sigma_j \convs}
    \content(\sigma_j)$. 
  \end{claim}
  \begin{proof}
    Let $\sigma_{j_0+1}$ be defined. Then, $j_0 = i_0(k)$ and $s(j_0) \convs$
    and we have that, by definition, $f$ correctly decides whether $a(b,k) \in
    \bigcup_{j \in \N, \sigma_j \convs} \content(\sigma_j)$. \claimqedhere\noqed
  \end{proof}

  We show that there exists a language $h$ can learn, but $h'$ cannot. To that end, we distinguish the following cases. 
  \begin{enumerate}
    \itemin{1. Case:} For all $j\in\N$, $\sigma_j$ is defined. Let $\tilde{T} =
        \bigcup_{j \in \N} \sigma_j$ and let $L = \content(\tilde{T})$. We first
        show that $h$ learns $L$. Let $T \in \Txt(L)$ and let $n_0$ be minimal
        such that $\content(T[n_0]) \neq \emptyset$. Let $n \geq n_0$ and $D
        \coloneqq \content(T[n])$. Furthermore, let $b \in \settwo{0,1}$ and $i
        \in \N$ be such that $a(b,i) = \max(D) = h(D)$. We show that $C_{a(b,i)}
        = L$. 
        \begin{enumerate}
            \item[$\supseteq$:] To show $C_{a(b,i)} \supseteq L$, let $x \in L$.
                We show that $\varphi_{a(b,i)}(x) = 1$. As $x \in L$, there
                exists $k' \in \N$ and $b' \in \settwo{0,1}$ such that $x =
                a(b',k')$. If $b = b'$ and $i = k'$, then $\varphi_{a(b,i)}(x) =
                1$ by definition. Otherwise, as all $\sigma_j$ are defined, by
                Claim~\ref{Claim:fdecide}, we have $f(b',k') = 1$, which is
                exactly the output of $\varphi_{a(b,i)}(x)$.
            \item[$\subseteq$:] To show $C_{a(b,i)} \subseteq L$, let $x \notin
                L$. Now, either there exist no $k' \in \N$ and $b' \in
                \settwo{0,1}$ such that $x = a(b',k')$. Then,
                $\varphi_{a(b,i)}(x) = 0$ by definition. Else, let $k'$ and $b'$
                such that $x = a(b',k')$. If $b' = 1 - b$ and $k' = i$, then
                $\varphi_{a(b,i)}(x) = 0$ by definition. Otherwise, again as all
                $\sigma_j$ are defined, by Claim~\ref{Claim:fdecide}, $f(b', k')
                = 0$, which is exactly the output of $\varphi_{a(b,i)}(x)$. 
        \end{enumerate}
        Thus, $C_{a(b,i)} = L$. So, $h$ learns $L$. On the other hand, $h'$ does
        not, as it makes infinitely many mind changes on text $\tilde{T}$.
    \itemin{2. Case:} There exists $j$ such that $\sigma_j$ is defined, but
        $\sigma_{j+1}$ is not. Let $j'$ be minimal such. Let $m \coloneqq
        |\sigma_{j'}|$ and consider the texts
        \begin{align*}
          T_0 = {\sigma_{j'}} \concat a{(0,m)}^\infty, \\ 
          T_1 = {\sigma_{j'}} \concat a{(1,m)}^\infty,
        \end{align*}
        as well as the languages $L_0 = \content(T_0)$ and $L_1 = \content(T_1)$. We
        show that $h$ can learn both $L_0$ and $L_1$, while $h'$ cannot. To show
        that $h$ learns $L_0$, let $T \in \Txt(L_0)$. As $L_0$ is finite, there
        exists $n_0$ such that $\content(T[n_0]) = L_0$. Then, for all $n \geq n_0$,
        we have $h(\content(T[n])) = \max(\content(T[n])) = a(0,m)$ as $a$ is
        interleaved increasing. We show that $C_{a(0,m)} = L_0$. 
        \begin{enumerate}
          \item[$\supseteq$:] To show $C_{a(0,m)} \supseteq L_0$, let $x \in L_0$.
              If $x= a(0,m)$, then $x \in C_{a(0,m)}$ by definition of
              $\varphi_{a(0,m)}(x)$. Otherwise, there exist $k' < m$ and $b' \in
              \settwo{0,1}$ such that $x = a(b',k')$. Note that $i_0(k') < j'$. Thus
              we can apply Claim~\ref{Claim:fdecide} and get $f(b',k')=1$ which is
              exactly the output of $\varphi_{a(0,m)}(x)$.  
          \item[$\subseteq$:] To
              show $C_{a(0,m)} \subseteq L_0$, let $x \notin L_0$. Now, either there
              exist no $k' \in \N$ and $b' \in \settwo{0,1}$ such that $x =
              a(b',k')$. Then, $\varphi_{a(0,m)}(x) = 0$ by definition. Else, let
              $k'$ and $b'$ such that $x = a(b',k')$. We distinguish the following
              cases to show that $x \notin C_{a(0,m)}$.
              \begin{itemize}
                \item If $k' = m$, then $\varphi_{a(b,i)}(x) = 0$ by definition.
                \item In the case of $k' > m$, we have $k' > |\sigma_{i_0(k')}|$ and thus $f(b', k') = 0$.
                \item Given the case $k' < m$, again by Claim~\ref{Claim:fdecide}, $f(b', k') = 0$.
              \end{itemize}
        \end{enumerate}
        Thus, $C_{a(0,m)} = L_0$ as desired. The reasoning for $L_1$ is analogous.

        So, $h$ learns both $L_0$ and $L_1$. However, $h'$ converges to the same
        hypothesis on both $T_0$ and $T_1$ rendering it incapable to learn both
        languages simultaneously. \qedhere
  \end{enumerate}\noqed
\end{proof}

Next, we show that, just as for $W$-indices, a padding argument makes iterative
behaviorally correct learners as powerful as Gold-style ones. 

\begin{theorem}
  We have that ${[\Txt\It\Bc_C]}\rec = {[\Txt\G\Bc_C]}\rec$.
\end{theorem}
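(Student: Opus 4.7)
The inclusion \({[\Txt\It\Bc_C]}\rec \subseteq {[\Txt\G\Bc_C]}\rec\) is immediate since every iterative learner induces a Gold-style learner of no greater power. For the converse, my plan is to adapt the standard padding argument familiar from $\Bc_W$-learning: given a $\Txt\G\Bc_C$-learner $h$ for a class $\La\subseteq\REC$, I will construct an iterative learner $h'$ that carries the full history of the text inside its current hypothesis by means of the padding function. Because padded indices compute the same characteristic function as the underlying index, the $\Bc_C$-success of $h$ transfers without modification to $h'$.

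For the construction, first fix a computable bijective coding of $\Seq$ into $\N$, so that we may pass sequences through $\pad$. Using $\pad$, define an injective total computable function $\pi\colon\N\times\Seq\to\N$ with $\varphi_{\pi(e,\sigma)}=\varphi_e$ for all $e\in\N$ and $\sigma\in\Seq$, together with total computable projections recovering $(e,\sigma)$ from any $y$ in the range of $\pi$. Define the iterative learner by $h'(\varepsilon)=\pi(h(\varepsilon),\varepsilon)$ and, for all $y=\pi(e,\sigma)$ and $x\in\N_\#$,
\[
h'(y,x)=\pi\bigl(h(\sigma\concat x),\,\sigma\concat x\bigr),
\]
with an arbitrary default on inputs outside the range of $\pi$, which are never reached along a valid iteration from $h'(\varepsilon)$.

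A straightforward induction on $n$ then yields $\It(h',T)(n)=\pi(h(T[n]),T[n])$ for every text $T$, and hence $\varphi_{\It(h',T)(n)}=\varphi_{h(T[n])}$. Thus, for $L\in\La$ and $T\in\Txt(L)$, any $n_0$ after which $h$ behaviorally correctly conjectures $\chi_L$ on $T$ is also such a bound for $h'$, so $h'$ is $\Bc_C$-successful on $L$.

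The only subtlety worth flagging is that $\Bc_C$ requires the limiting hypotheses to actually be $C$-indices for the target language, not merely indices of functions with that behaviour. This is precisely why padding works here: since $\varphi_{\pi(e,\sigma)}=\varphi_e$ as functions, $\pi(e,\sigma)$ is a $C$-index for $L$ iff $e$ is. Note also that we freely exploit the semantic flavour of $\Bc_C$: $h'$ emits a distinct padded index at each step, which would be fatal under $\Ex_C$ but is innocuous here. No other obstacle arises, so the construction completes the proof.
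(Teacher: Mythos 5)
Your proposal is correct and follows essentially the same route as the paper: a padding argument that stores the full input history inside the hypothesis, so that the iterative learner's conjecture at stage $n$ computes the same function as $h(T[n])$, with the observation that padding preserves the property of being a $C$-index. The paper's version is terser (it writes the starred learner directly as $\pad(h(\sigma),\sigma)$ and cites the $\Bc_W$ analogue), but the construction and the key point are identical.
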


\begin{proof}
  The inclusion ${[\Txt\It\Bc_C]}\rec \subseteq {[\Txt\G\Bc_C]}\rec$ follows
  immediately. For the other, we apply a padding argument as in the proof of
  $[\Txt\It\Bc_W] = [\Txt\G\Bc_W]$ as given in \citet{KSS17}. Let
  $h\in\partialCp$ be a learner and let $\La = \Txt\G\Bc_C(h)\cap\REC$. Recall
  that
  $\pad \in \totalCp$ is a padding function, that is, for all $e \in \N$ and all
  finite sequences $\sigma$ we have $\varphi_e = \varphi_{\pad(e, \sigma)}$. For
  any finite sequence $\sigma$, we define the iterative learner ${(h')}^*(\sigma)
  = \pad(h(\sigma), \sigma)$. Intuitively, the learner $h'$ simulates $h$ in the
  following way. At every iteration, given a datum $x$ and its previous guess
  $\pad(h(\sigma), \sigma)$, the learner unpads $\sigma$, attaches $x$ to it and
  makes the guess $\pad(h(\sigma \concat x), \sigma \concat x)$. While this is
  syntactically different hypothesis, it has the same semantics as $h(\sigma
  \concat x)$.

  We show that $h'$ $\Txt\It\Bc_C$-learns $\La$. Let $L \in \La$ and $T \in
  \Txt(L)$. Then, for every $n\in\N$, we have $C_{h(T[n])} = C_{{(h')}^*(T[n])}$.
  Thus, $h'$ learns $L$ as $h$ does.
\end{proof}

We show that the classes of languages learnable by some behaviorally correct
Gold-style (or, equivalently, iterative) learner, can also be learned by
partially set-driven ones. We follow the proof of \citet{DoskocK20} after a
private communication with Sanjay Jain. The idea there is to search for minimal
$\Bc$-locking sequences without directly mimicking the $\G$-learner. We transfer
this idea to hold when converging to $C$-indices as well. We remark that, while
doing the necessary enumerations, one needs to make sure these are
characteristic. One obtains this as the original learner eventually outputs
characteristic indices. 

\begin{theorem}
  We have that ${[\Txt\Psd\Bc_C]}\rec = {[\Txt\G\Bc_C]}\rec$. \label{thm:Cind-Psd-G-Bc}\label{th:psdgbc} 
\end{theorem}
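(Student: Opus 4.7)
The inclusion \({[\Txt\Psd\Bc_C]}\rec \subseteq {[\Txt\G\Bc_C]}\rec\) is trivial, so the work is in the converse. My plan is to adapt the \citet{DoskocK20} style ``search for a minimal \(\Bc\)-locking sequence,'' suitably strengthened so that the simulating learner eventually outputs \(C\)-indices. Let \(h\) be a \(\Txt\G\Bc_C\)-learner for \(\La = \Txt\G\Bc_C(h)\cap\REC\); by the normal-form results from Section~\ref{sub:normal_forms} (as \(\Bc_C\) is delayable) I may assume \(h\in\totalCp\). For every finite \(D\subseteq\N\) and every \(t\in\N\) I define
\begin{align*}
    q(D,t) = \{\sigma\in D_{\#}^{\leq t}\mid{} &\forall \tau\in D_{\#}^{\leq t},\ \forall x\leq t\colon \Phi_{h(\sigma)}(x)\leq t \AND \Phi_{h(\sigma\concat\tau)}(x)\leq t \\
    &\AND \varphi_{h(\sigma)}(x)=\varphi_{h(\sigma\concat\tau)}(x)\in\{0,1\}\},
\end{align*}
and set \(h'(D,t)=h(\min q(D,t))\) when \(q(D,t)\neq\emptyset\) and \(h'(D,t)=\ind(\emptyset)\) otherwise. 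Then \(h'\in\totalCp\) is a partially set-driven learner.

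To verify correctness, fix \(L\in\La\), \(T\in\Txt(L)\), and let \(D_t=\content(T[t])\). Choose a \(\Bc_C\)-locking sequence \(\rho\in L_{\#}^{*}\) of \(h\) on \(L\), so that for every \(\tau\in L_{\#}^{*}\) the index \(h(\rho\concat\tau)\) is a \(C\)-index for \(L\); hence each \(\varphi_{h(\rho\concat\tau)}\) is total, \(\{0,1\}\)-valued, and computes \(\chi_L\). Because only finitely many computations are invoked by the check at each stage, \(\rho\in q(D_t,t)\) for all sufficiently large \(t\). The crux is the following lemma, which I would prove next: if \(\sigma\in q(D_t,t)\) for infinitely many \(t\), then \(h(\sigma)\) is a \(C\)-index for \(L\). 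Given any \(x\) and any \(\tau\in L_{\#}^{*}\), for arbitrarily large \(t\) both \(\tau\in D_t^{\leq t}_{\#}\) and \(x\leq t\) hold while \(\sigma\in q(D_t,t)\), so the check forces \(\varphi_{h(\sigma)}(x)=\varphi_{h(\sigma\concat\tau)}(x)\in\{0,1\}\). Hence \(\varphi_{h(\sigma)}\) is total and \(\{0,1\}\)-valued; extending \(\sigma\) along any text for \(L\) to reach a further \(\Bc_C\)-locking sequence \(\sigma\concat\tau^{*}\) gives \(\varphi_{h(\sigma)}=\varphi_{h(\sigma\concat\tau^{*})}=\chi_L\).

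To conclude the argument, observe that \(\min q(D_t,t)\leq\rho\) for all large \(t\), since \(\rho\) itself is eventually in \(q(D_t,t)\). Among the finitely many \(\sigma\leq\rho\), those that enter \(q(D_t,t)\) only finitely often are not minimal for large \(t\), while those that enter infinitely often satisfy the key lemma and thus have \(h(\sigma)\) a \(C\)-index for \(L\). Therefore \(h'(D_t,t)\) is a \(C\)-index for \(L\) for all sufficiently large \(t\), so \(h'\) \(\Txt\Psd\Bc_C\)-learns \(L\).

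The main obstacle is designing the test \(q(D,t)\) so that the simulation genuinely produces characteristic indices (not merely semantically correct ones). The explicit requirement \(\varphi_{h(\sigma)}(x)\in\{0,1\}\) inside the check is essential: without it a candidate \(\sigma\) for which \(h(\sigma)\) is only partial or non-characteristic could pass the check infinitely often, so \(h'(D_t,t)\) would fail to be a \(C\)-index. As the discussion preceding the theorem indicates, it is precisely the fact that \(h\) eventually outputs \(C\)-indices that makes this strengthened test satisfiable sufficiently often by good sequences, closing the loop.
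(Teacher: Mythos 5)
Your overall strategy (make $h$ total, then search in a partially set-driven way for a minimal $\Bc_C$-locking sequence) matches the paper's, and your key lemma together with the concluding combinatorics over the finitely many $\sigma\leq\rho$ is sound. The gap is the anchor claim that $\rho\in q(D_t,t)$ for all sufficiently large $t$, which you justify by saying that only finitely many computations are invoked at each stage. That is not enough: at stage $t$ your test demands $\Phi_{h(\rho\concat\tau)}(x)\leq t$ for \emph{every} $\tau\in (D_t)_\#^{\leq t}$ and every $x\leq t$, and as $t$ grows, freshly admitted pairs $(\tau,x)$ must have their computations halt within the \emph{same} bound $t$. Every $h(\rho\concat\tau)$ is indeed a $C$-index for $L$, so each computation halts eventually, but the halting times may grow with $\abs{\tau}$. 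For a concrete failure, let $h$ output on $\rho\concat\#^{t}$ a (padded) $C$-index for $L$ whose computation on input $0$ first idles for $2^{t}$ steps; such an $h$ is still a legitimate total $\Txt\G\Bc_C$-learner and $\rho$ is still a $\Bc_C$-locking sequence, yet $\#^{t}\in (D_t)_\#^{\leq t}$ forces $\rho\notin q(D_t,t)$ for every $t\geq 1$. Without the anchor you lose the bound $\min q(D_t,t)\leq\rho$; worse, $q(D_t,t)$ can be empty for all large $t$, so $h'$ outputs $\ind(\emptyset)$ forever and fails on any infinite $L$.

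This is exactly the difficulty the paper's construction is built to avoid: it never outputs $h(\sigma)$ for a candidate $\sigma$. Instead it uses S-m-n to create a fresh index $h'(D,t)$ whose program, on each input $x$ separately, \emph{searches} for some $\sigma\in D^{\leq t}_\#$ all of whose extensions in $D^{\leq t}_\#$ give $x$ the same value (plus a minimality side condition on shorter $\sigma'$), and outputs that value, defaulting to $0$. Because the halting requirement there is existential and local to each $x$, no uniform time bound over the growing family of pairs $(\tau,x)$ is ever needed; the locking sequence supplies a halting witness per $x$ once $D$ and $t$ are large enough. Note also that the obvious repair of your test --- excluding $\sigma$ only when a refutation is \emph{found} within $t$ steps --- keeps $\rho$ in $q$ forever but then fails to eliminate candidates $\sigma<\rho$ for which $\varphi_{h(\sigma)}$ is partial or non-binary, so $h'$ could keep outputting a non-$C$-index; some per-element device like the paper's seems unavoidable here.
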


\begin{proof}
  The inclusion ${[\Txt\Psd\Bc_C]}\rec \subseteq {[\Txt\G\Bc_C]}\rec$ follows
  immediately. For the other, we follow an idea how $\Txt\G\Bc$-learning can be
  made partially set-driven, as given in \citet{DoskocK20} following a private
  communication with Sanjay Jain. To that end, let $h$ be a learner and let $\La
  = \Txt\G\Bc_C(h)\cap\REC$. By \citet{KP16}, we may assume $h$ to be total.
  Now, define the $\Psd$-learner $h'$ as follows. For $x, a \in \N$ and for
  finite $D \subseteq \mathbb{N}$ and $t \geq 0$, we first define the auxiliary
  total predicate $Q(x, a, (D,t))$ which holds true if and only if there exists
  a sequence $\sigma \in D^{\leq t}_\#$ such that both
  \begin{enumerate}[label=\textnormal{(\arabic*)}]
    \item for all $\tau \in D^{\leq t}_\#$ we have that $\varphi_{h(\sigma\tau)}(x) = a$, and\label{ForwC}
    \item for all $\sigma' < \sigma$, with $\sigma' \in D^*_\#$, there exists $\tau' \in D^{\leq t}_\#$ such that $\varphi_{h(\sigma'\tau')}(x) = a$.\label{BackwC}
  \end{enumerate}
  With the help of $Q$ we define the learner $h'$ such that, for finite $D \subseteq \mathbb{N}$, $t \geq 0$ and for all $x \in \N$,
  \begin{align*}
    \varphi_{h'(D,t)}(x) = \begin{cases} 
      1, &\falls Q(x, 1,(D,t)); \\ 0, &\sonst.
    \end{cases}
  \end{align*}
  Intuitively, we check whether the information given is enough to witness a
  (minimal) $\Bc_C$-locking sequence. Then, for every element, we evaluate
  whether it belongs to the language or not. Note that upon correct learning, no
  element can be witnessed to be both part of the language and not part of it. 

  We first show that $h'(D,t)$ is well defined. Assume there exists some $x\in\N$ and some natural number
  $a \neq 1$ such that $Q(x,a,(D,t))$ and $Q(x,1,(D,t))$ simultaneously,
  witnessed by $\sigma_a$ and $\sigma_1$ respectively. Without loss of
  generality, suppose $\sigma_a < \sigma_1$. Then, by Condition~\ref{BackwC} of
  $\sigma_1$, there exists some $\tau' \in D^{\leq t}_\#$ such that
  $\varphi_{h(\sigma_a \tau')}(x) = 1$. However, by Condition~\ref{ForwC} of
  $\sigma_a$, for all $\tau \in D^{\leq t}_\#$, we have $\varphi_{h(\sigma_a
  \tau)}(x) = a$, a contradiction.

  Let $L \in \La$. We proceed by proving $L \in \Txt\Psd\Bc_C(h')$. For that,
  let $T \in \Txt(L)$. By \citet{BlumBlum75}, there exists a $\Bc_C$-locking
  sequence for $h$ on $L$. Let $\alpha$ be the least such $\Bc_C$-locking
  sequence with respect to $<$. By \citet{OSW86}, for each $\alpha' < \alpha$
  such that $\content(\alpha') \subseteq L$, there exists $\tau_{\alpha'}$ such
  that $\alpha'\tau_{\alpha'}$ is a $\Bc_C$-locking sequence for $h$ on $L$.
  Now, let $n_0\in\N$ be large enough such that
  \begin{itemize}
    \item $n_0 \geq |\alpha|$,
    \item $\content(\alpha) \subseteq \content(T[n_0])$ and
    \item for all $\alpha' < \alpha$ such that $\content(\alpha') \subseteq L$,
        we have $\content(\alpha'\tau_{\alpha'}) \subseteq \content(T[n_0])$ and
        $|\tau_{\alpha'}| \leq n_0$.
  \end{itemize}
  We claim that for $t \geq n_0$ and $D=\content(T[t])$, we have $C_{h'(D,t)} =
  L$. Let $x \in \N$ and $a\in \N$ such that $\chi_L(x) = a$. As
  $D$ and $t$ are chosen sufficiently large, $\alpha$ is a candidate for the
  enumeration of $C_{h'(D,t)}$. Since $\alpha$ is a $\Bc_C$-locking sequence,
  for every $\tau \in D_\#^{\leq t}$, we will witness $\varphi_{h(\alpha
  \tau)}(x) = a$. Thus, Condition~\ref{ForwC} is witnessed. On the other hand,
  observe that for every $\sigma' < \alpha$, with $\content(\sigma') \subseteq
  D$, we have $\tau_{\sigma'} \in D_\#^{\leq t}$. So, we will witness
  $\varphi_{h(\sigma' \tau_{\sigma'})}(x) = a$ for some $\tau_{\sigma'} \in
  D_\#^{\leq t}$, that is, the Condition~\ref{BackwC}.

  As an element cannot be witnessed to be part of the language and not part of
  it simultaneously, we finally have \(\chi_{L}=\varphi_{h'(D,t)}\), concluding
  the proof, as $L \in \Txt\Psd\Bc_C(h')$.
\end{proof}

Lastly, we investigate transductive learners. Such learners base their
hypotheses on a single element. Thus, one would expect them to benefit from
dropping the requirement to converge to a \emph{single} hypothesis.
Interestingly, this does not hold true. This surprising fact originates from
$C$-indices encoding characteristic functions. Thus, one can simply search for
the minimal element on which no ``?'' is conjectured. The next result finalizes
the map shown in Figure~\ref{Map:Convergence} and, thus, this section.

\begin{theorem}
  We have that ${[\Txt\Td\Ex_C]}\rec = {[\Txt\Td\Bc_C]}\rec$.
\end{theorem}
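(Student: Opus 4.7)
The inclusion ${[\Txt\Td\Ex_C]}\rec \supseteq {[\Txt\Td\Bc_C]}\rec$ is the one requiring work; the reverse is immediate since $\Ex_C$ implies $\Bc_C$. Given a $\Td\Bc_C$-learner $h$ of a class $\La \subseteq \REC$, my plan is to construct a $\Td\Ex_C$-learner $h'$ for $\La$ that uses each single datum $x$ to recompute \emph{the same} canonical hypothesis, exploiting that a $C$-index provides a decidable membership test.

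The structural observation I would establish first is: for every $L \in \La$ and every $x \in L$ with $h(x) \neq \mbox{?}$, we already have $\varphi_{h(x)} = \chi_L$. The reason is that $x$ occurs infinitely often in every $T \in \Txt(L)$, so by the definition of $\Td$ the value $h(x)$ appears infinitely often in the hypothesis sequence, and the $\Bc_C$-condition then forces $\varphi_{h(x)} = \chi_L$. Moreover, at least one such $x$ must exist, for otherwise $\Td(h,T)$ stays constantly $\mbox{?}$ and cannot $\Bc_C$-learn $L$. Let $y_L \coloneqq \min\set{y \in L}{h(y) \neq \mbox{?}}$.

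With this in hand I define $h'(\#) = \mbox{?}$ and, for $x \in \N$, I simulate $h(x)$: if it returns $\mbox{?}$, output $\mbox{?}$; otherwise, use the total function $\varphi_{h(x)}$ to search for the least $y$ with $\varphi_{h(x)}(y) = 1$ and $h(y) \neq \mbox{?}$, and output $h(y)$. The crux of the argument, and the main step I expect to carry the proof, is the following: whenever $x \in L$ and $h(x) \neq \mbox{?}$, the condition $\varphi_{h(x)}(y) = 1$ is by the structural fact above equivalent to $y \in L$, so this search in effect looks for the least element of $L$ on which $h$ is not $\mbox{?}$ and therefore returns $y_L$ \emph{independently of} $x$; hence $h'(x) = h(y_L)$, which is a $C$-index for $L$ that does not depend on $x$.

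To conclude, on any $T \in \Txt(L)$ every value $h'(T(i))$ lies in $\set{\mbox{?}, h(y_L)}$, and since $y_L \in L$ itself appears infinitely often in $T$, the value $h(y_L)$ is produced infinitely often, so $\Td(h',T)$ syntactically converges to the correct $C$-index $h(y_L)$. The one subtlety to be careful about is that $h'$ only needs to lie in $\partialCp$: the search in step~2 may diverge on $x$ outside every $L \in \La$, where $h(x)$ need not be a $C$-index, but such $x$ never appears in a text of the target class; for the relevant $x \in L \in \La$ with $h(x) \neq \mbox{?}$ the search provably terminates, in fact within at most $x+1$ iterations, because $\varphi_{h(x)} = \chi_L$ is total and $y_L \leq x$.
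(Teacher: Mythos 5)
Your proof is correct and follows essentially the same route as the paper's: observe that every value $h(x) \neq \mbox{?}$ for $x \in L$ must already be a $C$-index for $L$, then canonicalize by using the decidability of $C_{h(x)} = L$ to search for its least element $y$ with $h(y) \neq \mbox{?}$ and output $h(y)$. The only slip is in the justification of the structural observation: $x$ need not occur infinitely often in \emph{every} text of $L$, only in \emph{some} text of $L$, which is all your argument (and the paper's) actually requires.
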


\begin{proof}
  The inclusion ${[\Txt\Td\Ex_C]}\rec \subseteq {[\Txt\Td\Bc_C]}\rec$ is
  immediate. For the other direction, let $h$ be a learner and $\La =
  \Txt\Td\Bc_C(h)\cap\REC$. We provide a learner $h'$ such that $\La \subseteq
  \Txt\Td\Ex_C(h')$. Let $L \in \La$. We note that, for any $x \in L$, if not
  $h(x) = \mbox{?}$, then $h(x)$ is a $C$-index for the language $L$, that is,
  $C_{h(x)} = L$. Assume there exists an $x \in L$ where $h(x) \neq \mbox{?}$ is
  no $C$-index of $L$. Then, on a text with infinitely many occurrences of $x$
  the language $L$ cannot be $\Txt\Td\Bc_C$-learned using $h$. Now, we define
  the $\Txt\Td\Ex_C$-learner $h'$ for all $x \in \N$ as
  \[
      h'(x) = \begin{cases} \mbox{?}, &\falls h(x) = \mbox{?}; \\ h(\min\{x \in
      C_{h(x)} \mid h(x) \neq \mbox{?} \}), &\sonst. \end{cases}
  \]
  It is straightforward to verify the correctness of $h'$. 
\end{proof}

\acks{%
  This work was supported by DFG Grant Number KO 4635/1-1. 
}

\bibliography{LTbib}

\begin{thebibliography}{21}
\providecommand{\natexlab}[1]{#1}
\providecommand{\url}[1]{\texttt{#1}}
\expandafter\ifx\csname urlstyle\endcsname\relax
  \providecommand{\doi}[1]{doi: #1}\else
  \providecommand{\doi}{doi: \begingroup \urlstyle{rm}\Url}\fi

\bibitem[Blum and Blum(1975)]{BlumBlum75}
Lenore Blum and Manuel Blum.
\newblock Toward a mathematical theory of inductive inference.
\newblock \emph{Information and Control}, 28:\penalty0 125--155, 1975.

\bibitem[Blum(1967)]{Blum67}
Manuel Blum.
\newblock A machine-independent theory of the complexity of recursive
  functions.
\newblock \emph{Journal of the {ACM}}, 14:\penalty0 322--336, 1967.

\bibitem[Carlucci et~al.(2007)Carlucci, Case, Jain, and Stephan]{CCJS07}
Lorenzo Carlucci, John Case, Sanjay Jain, and Frank Stephan.
\newblock Results on memory-limited u-shaped learning.
\newblock \emph{Inf. Comput.}, 205:\penalty0 1551--1573, 2007.

\bibitem[Case and K{\"{o}}tzing(2016)]{CK16}
John Case and Timo K{\"{o}}tzing.
\newblock Strongly non-{U}-shaped language learning results by general
  techniques.
\newblock \emph{Information and Computation}, 251:\penalty0 1--15, 2016.

\bibitem[Case and Lynes(1982)]{CL82}
John Case and Christopher Lynes.
\newblock Machine inductive inference and language identification.
\newblock In \emph{Proc. of the International Colloquium on Automata, Languages
  and Programming (ICALP)}, pages 107--115, 1982.

\bibitem[Doskoč and Kötzing(2020)]{DoskocK20}
Vanja Doskoč and Timo Kötzing.
\newblock Cautious limit learning.
\newblock In \emph{Proc. of the International Conference on Algorithmic
  Learning Theory (ALT)}, 2020.

\bibitem[Fulk(1985)]{Fulk85}
Mark Fulk.
\newblock \emph{A Study of Inductive Inference Machines}.
\newblock PhD thesis, 1985.

\bibitem[Fulk(1990)]{Fulk90}
Mark~A. Fulk.
\newblock Prudence and other conditions on formal language learning.
\newblock \emph{Information and Computation}, 85:\penalty0 1--11, 1990.

\bibitem[Gold(1967)]{Gold67}
E.~Mark Gold.
\newblock Language identification in the limit.
\newblock \emph{Information and Control}, 10:\penalty0 447--474, 1967.

\bibitem[Jain et~al.(1999)Jain, Osherson, Royer, and Sharma]{JORS99}
Sanjay Jain, Daniel Osherson, James~S. Royer, and Arun Sharma.
\newblock \emph{Systems that Learn: An Introduction to Learning Theory}.
\newblock MIT Press, Cambridge (MA), Second Edition, 1999.

\bibitem[Kinber and Stephan(1995)]{KS95}
Efim~B. Kinber and Frank Stephan.
\newblock Language learning from texts: Mindchanges, limited memory, and
  monotonicity.
\newblock \emph{Inf. Comput.}, 123:\penalty0 224--241, 1995.

\bibitem[K{\"{o}}tzing and Palenta(2016)]{KP16}
Timo K{\"{o}}tzing and Raphaela Palenta.
\newblock A map of update constraints in inductive inference.
\newblock \emph{Theoretical Computer Science}, 650:\penalty0 4--24, 2016.

\bibitem[K{\"{o}}tzing et~al.(2017)K{\"{o}}tzing, Schirneck, and Seidel]{KSS17}
Timo K{\"{o}}tzing, Martin Schirneck, and Karen Seidel.
\newblock Normal forms in semantic language identification.
\newblock In \emph{Proc. of the International Conference on Algorithmic
  Learning Theory (ALT)}, pages 76:493--76:516, 2017.

\bibitem[Kötzing(2009)]{Kotzing09}
Timo Kötzing.
\newblock \emph{Abstraction and Complexity in Computational Learning in the
  Limit}.
\newblock PhD thesis, University of Delaware, 2009.

\bibitem[Lange et~al.(2008)Lange, Zeugmann, and Zilles]{LZZ08}
Steffen Lange, Thomas Zeugmann, and Sandra Zilles.
\newblock Learning indexed families of recursive languages from positive data:
  {A} survey.
\newblock \emph{Theor. Comput. Sci.}, 397:\penalty0 194--232, 2008.

\bibitem[Osherson et~al.(1986)Osherson, Stob, and Weinstein]{OSW86}
Daniel Osherson, Michael Stob, and Scott Weinstein.
\newblock \emph{Systems that Learn: An Introduction to Learning Theory for
  Cognitive and Computer Scientists}.
\newblock MIT Press, Cambridge (MA), 1986.

\bibitem[Osherson and Weinstein(1982)]{OW82}
Daniel~N. Osherson and Scott Weinstein.
\newblock Criteria of language learning.
\newblock \emph{Information and Control}, 52:\penalty0 123--138, 1982.

\bibitem[Rogers~Jr.(1987)]{Rogers87}
Hartley Rogers~Jr.
\newblock \emph{Theory of recursive functions and effective computability}.
\newblock Reprinted by {MIT} Press, Cambridge (MA), 1987.

\bibitem[Sch{\"{a}}fer{-}Richter(1984)]{SchRicht84}
Gisela Sch{\"{a}}fer{-}Richter.
\newblock \emph{{\"{U}}ber Eingabeabh{\"{a}}ngigkeit und Komplexit{\"{a}}t von
  Inferenzstrategien}.
\newblock PhD thesis, {RWTH} Aachen University, Germany, 1984.

\bibitem[Wexler and Culicover(1980)]{WC80}
Kenneth Wexler and Peter~W. Culicover.
\newblock Formal principles of language acquisition.
\newblock \emph{MIT Press, Cambridge (MA)}, 1980.

\bibitem[Wiehagen(1976)]{Wiehagen76}
Rolf Wiehagen.
\newblock Limes-erkennung rekursiver funktionen durch spezielle strategien.
\newblock \emph{J. Inf. Process. Cybern.}, 12:\penalty0 93--99, 1976.

\end{thebibliography}

\end{document}